\newcommand*{\QEDA}{\hfill\ensuremath{\blacksquare}}
\newcommand*{\CQFD}{\hfill\ensuremath{\square}}
\newcommand{\mmod}[1]{\ \mathrm{mod}\ #1}
\newcommand{\mfoldind}[3][n\equiv]{\chi_{\{#1 #2\mmod #3\}}}
\newcommand{\NN}{\mathbb{N}}
\newcommand{\ZZ}{\mathbb{Z}}
\newcommand{\QQ}{\mathbb{Q}}
\newcommand{\KK}{\mathbb{K}}
\DeclareMathOperator{\lcm}{lcm}
\begin{document}
    \title{Hypergeometric-Type Sequences}

\author{{\it In memory of Marko Petkov\v{s}ek
}
\\[0.5cm]
Bertrand Teguia Tabuguia\orcidlink{0000-0001-9199-7077}}

\authorrunning{B. Teguia Tabuguia}

\institute{Department of Computer Science, University of Oxford, UK\\
Max Planck Institute for Software Systems, Saarbr\"{u}cken, Germany\\
\email{bertrand.teguia@cs.ox.ac.uk}\\
}

\maketitle              

\begin{abstract}
We introduce hypergeometric-type sequences. They are linear combinations of interlaced hypergeometric sequences (of arbitrary interlacements). We prove that they form a subring of the ring of holonomic sequences. An interesting family of sequences in this class are those defined by trigonometric functions with linear arguments in the index and $\pi$, such as Chebyshev polynomials, $\left(\sin^2\left(n\,\pi/4\right)\cdot\cos\left(n\,\pi/6\right)\right)_n$, and compositions like $\left(\sin\left(\cos(n\pi/3)\pi\right)\right)_n$.

We describe an algorithm that computes a hypergeometric-type normal form of a given holonomic $n\text{th}$ term whenever it exists. Our implementation enables us to generate several identities for terms defined via trigonometric functions.

\keywords{Petkov\v{s}ek's algorithm Hyper \and mfoldHyper \and P-recursive sequences \and interlaced hypergeometric term \and $m$-fold indicator sequences}
\end{abstract}

\section{Introduction}\label{sec:intro}

The connection between summation and linear difference equations with polynomial coefficients dates back to Fasenmyer \cite{fasenmyerphd,fasenmyer1949note,WolfBook}. These equations, called holonomic or P-recursive (or P-finite), easily lead to closed forms of the corresponding sums when the equations are made of two non-zero terms. The resulting equations have the form
\begin{equation}
    P(n)\,a_{n+m} = Q(n)\,a_n,
\end{equation}
where $P$ and $Q$ are polynomials. When $m=1$, the corresponding solution is a hypergeometric term. For $m>1$, we say that the solution is $m$-fold hypergeometric.

Zeilberger \cite{zeilberger1990holonomic} and Chyzak \cite{chyzak2000extension} generalized the algorithmic approach to finding holonomic recurrence equations for sums. Zeilberger proposed an efficient algorithm for dealing with sums whose summands are hypergeometric terms \cite{zeilberger1990fast}. It resulted in several automatic proofs of special functions and combinatorial identities in synergy with his collaboration with Wilf \cite{wilf1990towards} for the WZ method, which cleverly uses Gosper's algorithm \cite{gosper1978decision}. The recipe for systematic finding of hypergeometric identities could not be ready before the availability of an algorithm for finding all hypergeometric term solutions of P-recursive equations. For instance, for a summation problem like 
$$
s_n\coloneqq\sum_{k=-\infty}^{\infty} (-1)^k \binom{n}{k} \binom{d\cdot k}{n}, \,\quad\, d\in\NN,
$$
Zeilberger's original algorithm finds a recurrence equation of order $d$ while it can be proven that $s_n=(-d)^n$ -- a hypergeometric term.

In 1993, Petkov\v{s}ek completed the recipe by proposing algorithm \textit{Hyper} \cite{petkovvsek1992hypergeometric}. Thanks to that, the computer algebra community benefited from the book $A=B$ \cite{AeqB}, which gathers all fundamental results, at least for that moment. The present paper does not intend to investigate symbolic summation. For further results in this direction, we recommend the following non-exhaustive list of references \cite{paule1995greatest,koepf1995algorithms,abramov2001minimal,koutschan2013creative,chyzak2009non,chyzakabc,chen2018reduction}, \cite[Chapter 5]{KauersDfinite}.

Petkov\v{s}ek's algorithm did not only serve combinatorial identities but also formal power series. Koepf establishes a connection with his work \cite{koepf1992power}, where he also highlighted the need for an algorithm for finding $m$-fold hypergeometric term solutions of holonomic recurrences. Petkov\v{s}ek and Salvy \cite{petkovvsek1993finding} proposed a way to adapt Hyper to this context. Abramov further investigated the study of such solutions \cite{abramovmsparse}. Ryabenko gave the first concrete implementation in the Maple computer algebra system (CAS) \cite{Ryabenkoformal}.

Mark van Hoeij improved Petkov\v{s}ek's algorithm to a much more efficient version \cite{van1999finite,cluzeau2006computing}. A theoretical algorithm for its generalization to the $m$-fold case was first proposed in \cite{horn2012m}. We here recall results from the author's Ph.D. thesis \cite{BTphd} (see also \cite{BTWKsymbconv}) as a refreshment before introducing hypergeometric-type sequences.

\begin{definition}[Proper hypergeometric-type power series \cite{Teguia2021hypergeometric}]\label{def:hyptypseries} For an expansion around zero, a series $S(z)$ is said to be of ``proper'' hypergeometric type if it can be written as
\begin{equation}\label{eq:defhyptypseries}
    S(z) \coloneqq \sum_{j=1}^J S_{j}(z),\,\quad\, S_{j}(z)\coloneqq \sum_{i=0}^{m_j-1}\sum_{n=0}^{\infty} a_i(m_j\,n+i)\,z^{m_j\,n+i},
\end{equation}
where $m_j,J\in \NN, m_j\neq 0$, and $a_i(n)$ is a linear combination of $m_j$-fold hypergeometric terms.
Thus, a proper hypergeometric-type power series is a linear combination of formal power series whose coefficients are $m$-fold hypergeometric terms. A proper hypergeometric-type function is a function that can be expanded as a proper hypergeometric-type power series.
\end{definition}
The word ``proper'' in \Cref{def:hyptypseries} is used to lighten the definition in \cite{BTWKmaple,Teguia2021hypergeometric} by neglecting Laurent-Puiseux series. Note that in contrast with the definition given in the original papers \cite{BTphd,BTWKsymbconv}, here we highlight that the coefficients $a_i$'s are not necessarily built from the same hypergeometric terms, and this is in perfect agreement with the scope of the formal power series algorithm proposed there.

Recently, Koepf and the author designed \textit{mfoldHyper}, an algorithm that extends the algorithms by Petkov\v{s}ek and van Hoeij to find all $m$-fold hypergeometric term solutions of P-recursive equations. It has the advantage of offering a better efficiency than the algorithm from \cite{petkovvsek1993finding,Ryabenkoformal} (see also \cite{BTvariant}). Algorithm mfoldHyper helped to design a complete algorithm to convert a univariate holonomic function into a hypergeometric-type power series \cite{BTphd,BTWKsymbconv}. The resulting algorithm is available from Maple 2022 as \texttt{convert/FormalPowerSeries}, and mfoldHyper as \texttt{LREtools:-mhypergeomsols}, all from the \texttt{FPS} package at \cite{FPS}. The Maxima version of the package is in the process of being integrated into Maxima.

We observed that the concept of hypergeometric type can be adapted to sequences. In this regard, a similar development as that of formal power series would enable a compact definition of closed forms for terms that are usually expressed in cases depending on some properties satisfied by the index. Roughly speaking, a hypergeometric-type sequence is a sequence whose general term ($n$th term) is a linear combination of $m$-fold hypergeometric terms. We will give a formal definition in the next section. Several examples can be generated with trigonometric functions. In this case, the property satisfied by the index refers to its remainder with respect to some non-negative integer.

Throughout this paper, we will always assume that $n$ in an integer, usually non-negative, i.e., $n\in\NN\coloneqq \{0,1,2,\ldots\}$. It may sound intriguing to notice that there seems to be no CAS that uses a symbolic computation algorithm to find normal forms free of unevaluated trigonometric functions. Some examples are $\sin^2\left(n\,\pi/4\right)$, $\sin\left(\cos\left(n\,\pi/3\right)\,\pi\right)$, etc. Of course, one could always eliminate trigonometric functions using simplifications with Euler's formulas (see, for instance, \texttt{convert/exp} in Maple or \texttt{TrigToExp} in Mathematica); however, this would not define normal forms as the simplifications used after conversion into exponentials may be tailored to the given trigonometric expressions. Because Maple seems to be the only CAS implementing mfoldHyper or its analog from \cite{Ryabenkoformal}, and such a simplification of trigonometric sequences is not available in Maple, it is reasonable to see our idea as a novel method. It sets a symbolic approach for finding closed forms of a more general class of sequences, for which we also develop a theoretical framework.

Our approach to hypergeometric-type sequences resembles that of hypergeometric-type power series. We consider three main steps: given a term $h_n$,

\begin{enumerate}
    \item find a P-recursive equation satisfied by $h_n$;
    \item find a basis of all $m$-fold hypergeometric term solutions of that equation (using mfoldHyper, for instance);
    \item use initial values from $h_n$ to deduce a hypergeometric-type normal form for $h_n$.
\end{enumerate}
The three steps would be successful if $h_n$ is the term of a hypergeometric-type sequence (or simply a hypergeometric-type term).

In the next section, we define hypergeometric-type sequences, study their structure, and state some properties, like their link to hypergeometric-type functions. After discussing canonical and normal forms of hypergeometric-type terms, \Cref{sec:sec2} details the three steps of our algorithmic method. In \Cref{sec:sec3}, we present our current Maple implementation from the package \texttt{HyperTypeSeq}. We recommend using Maple versions between 2019 and 2021 because of some misbehavior of the package with the recent releases. The package is accessible via Github at \cite{HyperTypeSeq}. Below are two simple formulas automatically computed using our implementation. 
\begin{lstlisting}
> with(HyperTypeSeq):
> HTS(sin(n*Pi/4)^2,n)
\end{lstlisting}
\begin{dmath}\label{maple1}
\frac{1}{2}-\frac{\left(-1\right)^{\frac{n}{2}} \chi_{\left\{\mathit{modp} \left(n ,2\right)=0\right\}}}{2}
\end{dmath}

The sequence \href{https://oeis.org/A212579}{A212579} from the OEIS \cite{sloane2003line} satisfies the recurrence equation
\begin{lstlisting}
> RE:= a(n) = a(n-1)+2*a(n-2)-a(n-3)-2*a(n-4)-a(n-5)+2*a(n-6)+a(n-7)-a(n-8)
\end{lstlisting}
\begin{dmath*}
    \mathit{RE}\coloneqq a\! \left(n\right)=a\! \left(n-1\right)+2 a\! \left(n-2\right)-a\! \left(n-3\right)-2 a\! \left(n-4\right)-a\! \left(n-5\right)+2 a\! \left(n-6\right)+a\! \left(n-7\right)-a\! \left(n-8\right).
\end{dmath*}
Using the first thirteen initial values, our algorithm finds the closed form:
\begin{lstlisting}
> REtoHTS(RE,a(n),[0, 1, 8, 31, 80, 171, 308, 509, 780, 1137, 1584, 2143, 2812])
\end{lstlisting}
\begin{dmath}\label{maple2}
\frac{4}{9}+\frac{31}{12} n-3 n^{2}+\frac{67}{36} n^{3}-\frac{1}{4} n \chi_{\left\{\mathit{modp} \left(n,2\right)=0\right\}}-\frac{4}{9} \chi_{\left\{\mathit{modp} \left(n,3\right)=0\right\}}-\frac{8}{9} \chi_{\left\{\mathit{modp} \left(n,3\right)=1\right\}}.
\end{dmath}

In these outputs, $\chi_{\left\{\mathit{modp} \left(n ,m\right)=j\right\}}$ denotes the indicator function for the set of integers with non-negative remainder $j\in\{0,\ldots,m-1\}$ in their division by $m$; and $\chi_{\left\{\mathit{modp} \left(n ,1\right)=0\right\}}=1$.

\section{Structure and properties}

We consider sequences in $\KK^{\NN}$, where $\KK$ is a field of characteristic zero. In general, $\KK$ is a number field or a field such that the field extension $\KK/\QQ$ has a finite transcendence degree. We will use $n\in\NN$ as the index variable. All our results may certainly extend to negative indices -- $n\in\ZZ$, but we restrict ourselves to $\NN$ to fix the starting index at $0$ or $n_0\in\NN$ for factorial-like sequences. As it turns out, this is enough to introduce all the necessary concepts. When referring to an arbitrary sequence in $\KK^{\NN}$, we will denote that sequence with parentheses as $(s)_n\in\KK^{\NN}$ or simply $(s)$. The $n$th term of a sequence $(s)_n$, also called its general term, is $s_n$ (without the parentheses) or $s(n)$. We usually use the former notation; we use the latter when the subscript is already occupied. For instance, the term $T_1(n)$ is the $n$th term of the sequence $(T_1)_n$.

\subsection{Interlacements: $m$-fold indicator sequences}

Before diving into the concept of interlacement, let us recall a notion commonly used in set theory that will prove useful in the sequel. Let $\mathcal{A}$ be a set, and $A\subset \mathcal{A}$. The indicator function of $A$, denoted $\chi_A$, is defined as
\begin{align}\label{Indfunc}
    \chi_{A}\colon \mathcal{A} &\longrightarrow \{0, 1\} \nonumber\\
                         a     &\mapsto \chi_A(a)\coloneqq\begin{cases}1 \quad \text{ if } a\in A\\ 0 \quad \text{ otherwise }\end{cases}.
\end{align}

\begin{definition}[$m$-fold indicator sequence]\label{def:defmfoldind} A sequence $(s)_n$ is said to be $m$-fold indicator if there exists a positive integer $m_0$ such that $(s)_n$ is the indicator function of the non-negative integers in a coset of $\ZZ/m_0\ZZ$. In this case, $(s)_n$ is called $m_0$-fold indicator sequence, and $m_0$ is its characteristic.
\end{definition}
Note that the definition naturally extends to the whole set of integers $\ZZ$ but then requires a certain care for the starting index. Without a specific value, the terminology $m$-fold indicator sequence refers to any such sequence. In that case, $m$ may also be used as the corresponding characteristic without ambiguity.
\begin{example}\label{ex:defmfoldind}\item
\begin{itemize}
    \item $\chi_{\{2n+1,~ n\in\NN\}}:$ the indicator function of the set $\{2n+1,n\in\NN\}$ of odd natural numbers is a $2$-fold indicator sequence.
    \item $\chi_{\{3n+2,~ n\in\NN\}}$ is an $m$-fold indicator sequence of characteristic $3$. \QEDA 
\end{itemize} 
\end{example}
The following proposition shows that the characteristic of an $m$-fold indicator sequence is unique. 
\begin{proposition}\label{prop:uniqm} Let $m$ be a positive integer such that the sequence $(s)_n$ is $m$-fold indicator. Then $m$ is unique.
\end{proposition}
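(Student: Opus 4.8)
The plan is to reduce the statement to an intrinsic property of the support of the sequence, and then to recover $m$ from that support in a way that cannot depend on the chosen representation. First I would introduce the support $S \coloneqq \{n\in\NN : s_n = 1\}$, i.e.\ the set of indices at which the sequence equals $1$. By \Cref{def:defmfoldind}, asserting that $(s)_n$ is $m$-fold indicator with characteristic $m$ means precisely that $S$ is the set of non-negative integers lying in a coset $r + m\ZZ$ of $\ZZ/m\ZZ$. Normalizing the residue to $r\in\{0,\ldots,m-1\}$, this says $S = \{r,\,r+m,\,r+2m,\ldots\}$. Observe that $S$ is always nonempty and infinite, since any coset of $\ZZ/m\ZZ$ meets $\NN$ in infinitely many points; this guarantees that the quantities used below are well defined.

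The key step is to exhibit the characteristic $m$ as a function of $S$ alone, independent of the representing coset. I would do this in either of two equivalent ways. The first is a gap argument: the two smallest elements of $S$ are $r$ and $r+m$, so $m$ equals the difference between the second-smallest and the smallest element of $S$; since $S$ is read off directly from $(s)_n$, this difference is intrinsic to the sequence. The second is a density argument: for $S = \{r,\,r+m,\,r+2m,\ldots\}$ one has $\lim_{N\to\infty}\lvert S\cap\{0,\ldots,N-1\}\rvert/N = 1/m$, and the left-hand side depends only on $(s)_n$.

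To conclude, I would suppose that $(s)_n$ is simultaneously $m_1$-fold indicator and $m_2$-fold indicator for positive integers $m_1,m_2$. Both values then satisfy the same characterization in terms of the single set $S$: either $m_1 = m_2$ as the common gap between the two smallest elements of $S$, or $1/m_1 = 1/m_2$ via the density, and in both cases $m_1 = m_2$. Hence the characteristic is unique.

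I do not expect a genuine obstacle here; the only point demanding care is ensuring that the support $S$ is infinite, so that a second-smallest element (equivalently, a nonzero density) exists — this is automatic because a coset of $\ZZ/m\ZZ$ contains arbitrarily large non-negative integers. The choice between the gap formulation and the density formulation is purely cosmetic, and I would favor the gap argument since it avoids any limiting process.
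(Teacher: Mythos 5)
Your proof is correct, and it takes a genuinely different route from the paper. The paper argues by contradiction: assuming $(s)_n$ is both $m_1$-fold and $m_2$-fold indicator with $m_1<m_2$ and remainders $j_1,j_2$, it splits into the cases $j_1\neq j_2$ (where the index $j_1$ itself already forces $s_{j_1}$ to be both $1$ and $0$) and $j_1=j_2$ (where the index $j_1+m_1$ does the job). You instead give a representation-free characterization of the characteristic: the support $S=\{n:s_n=1\}$ of an $m$-fold indicator sequence with normalized remainder $r$ is exactly $\{r,r+m,r+2m,\ldots\}$, so $m$ is the gap between the two smallest elements of $S$, a quantity determined by the sequence alone. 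This avoids the case analysis entirely, and as a bonus it also shows the remainder $r=\min S$ is unique, which the paper's argument does not state explicitly. Your one point of care --- that $S$ is infinite so a second-smallest element exists --- is handled correctly. The only minor difference in utility is that the paper's explicit ``witness index'' style recurs in its later arguments (e.g.\ for sums of $m$-fold indicator sequences), whereas your density variant introduces a limit that is unnecessary here; your preference for the gap formulation is the right call.
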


\vspace{-0.3cm}

\begin{proof} Denote by $[j]_m, j\in\{0,\ldots,m-1\}$, the non-negative integers of the coset in $\ZZ/m\ZZ$ of integers with remainder $j\geq 0$ in their division by $m$. Let $m_1$ and $m_2$ be two distinct positive integers such that the sequence $(s)_n$ is $m_1$-fold indicator and $m_2$-fold indicator. To obtain a contradiction, we only have to find an index $n$ at which $s_n=0$ and $s_n=1$. Without loss of generality, we can assume that $m_1<m_2$. Suppose that $(s)_n$ is the indicator sequence of $[j_1]_{m_1}$ and $[j_2]_{m_2}$, $j_1\in\{0,\ldots,m_1-1\}$, $j_2\in\{0,\ldots,m_2-1\}$. If $j_1\neq j_2$ then we are done since $s_{j_1}=1$ as an $m_1$-fold indicator sequence but $s_{j_1}=0$ as an $m_2$-fold indicator sequence. If $j_1=j_2$, then $j_1+m_1\in [j_1]_{m_1}$, but $j_1+m_1\notin [j_2]_{m_2}$. Hence, taking $n$ as $j_1+m_1$ yields a contradiction. Therefore we must have $m_1=m_2$.\CQFD
\end{proof}

A direct consequence of \Cref{prop:uniqm} is that we can count the number $m$-fold indicator sequences of characteristic $m$. The following corollary is also used as a definition to uniquely identify $m$-fold indicator sequences and fix a notation that we will use in the rest of the paper.
\begin{corollary}\label{cor:defmfremainder} There are exactly $m$ $m$-fold indicator sequences of characteristic $m$. For any $j\in\{0,\ldots,m-1\}$, the $m$-fold indicator sequence of characteristic $m$ and remainder $j$, denoted $(\chi_{\{j\mmod m\}})_n$, is defined by the general term
\begin{equation}
    \chi_{\{j\mmod m\}}(n) = \chi_{\{n\equiv j\mmod m\}} = \begin{cases}1 \,\quad \text{ if } n\equiv j\mmod m\\ 0 \,\quad \text{otherwise}\end{cases}, \quad n\in\NN,
\end{equation}
where $n\equiv j\mmod m$ means that $j$ is the non-negative remainder of $n$ in its division by $m$.
\end{corollary}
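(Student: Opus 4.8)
The plan is to exhibit an explicit bijection between the remainder set $\{0,\ldots,m-1\}$ and the family of $m$-fold indicator sequences of characteristic $m$; counting the former then yields the count of the latter, while the correspondence itself delivers the stated general term. So the whole statement (the count and the normal-form formula) reduces to one bookkeeping argument.

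First I would recall that the quotient $\ZZ/m\ZZ$ consists of exactly $m$ cosets, namely the residue classes $[0]_m,\ldots,[m-1]_m$ in the notation of the proof of \Cref{prop:uniqm}, one for each admissible remainder $j\in\{0,\ldots,m-1\}$. For each such $j$, let $(s^{(j)})_n$ be the indicator sequence of $[j]_m$. By \Cref{def:defmfoldind} this sequence is $m$-fold indicator, and by the uniqueness established in \Cref{prop:uniqm} its characteristic is exactly $m$ (it cannot secretly equal some other $m'$, since the characteristic is unique). This produces a well-defined map $j\mapsto (s^{(j)})_n$ from $\{0,\ldots,m-1\}$ into the set of $m$-fold indicator sequences of characteristic $m$.

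Next I would verify that this map is a bijection. Surjectivity is immediate from \Cref{def:defmfoldind}: any $m$-fold indicator sequence of characteristic $m$ is, by definition, the indicator of the non-negative integers in some coset in $\ZZ/m\ZZ$, and every such coset is one of the $[j]_m$. For injectivity, suppose $j_1\neq j_2$ with $j_1,j_2\in\{0,\ldots,m-1\}$; then $j_1\not\equiv j_2\mmod m$, so evaluating at the index $n=j_1$ gives $s^{(j_1)}_{j_1}=1$ while $s^{(j_2)}_{j_1}=0$, exactly in the spirit of the argument for \Cref{prop:uniqm}. Hence the sequences are pairwise distinct, the map is injective, and there are precisely $m$ such sequences.

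Finally, denoting the image $(s^{(j)})_n$ by $(\chi_{\{j\mmod m\}})_n$, the general term follows by unwinding the definition of the indicator function in \eqref{Indfunc} for the set $[j]_m=\{n\in\NN:\, n\equiv j\mmod m\}$: it equals $1$ precisely when $n\equiv j\mmod m$ and $0$ otherwise, which is the claimed formula. I expect no genuinely hard step here; the only point requiring care is that the characteristic of each $(s^{(j)})_n$ is really $m$ and not a proper divisor of $m$, and this is exactly what the uniqueness in \Cref{prop:uniqm} guarantees, ruling out any collapse of two distinct remainders onto the same sequence or onto a sequence of smaller characteristic.
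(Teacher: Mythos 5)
Your argument is correct and is essentially the proof the paper intends: the corollary is presented there as a direct consequence of \Cref{prop:uniqm}, and your bijection $j\mapsto(\chi_{\{j\mmod m\}})_n$ between the $m$ cosets of $\ZZ/m\ZZ$ and the sequences of characteristic $m$ — with uniqueness of the characteristic guaranteeing both well-definedness and that no two remainders collapse — is exactly the bookkeeping that justifies it. No gaps.
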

We mention that $m$-fold indicator sequences are intrinsically discussed in \cite{abramovmsparse}, but the attention there is on solving differential equations rather than studying the object itself. Here, we present definitions and properties to capture the mathematical essence of the concept of interlacement, which we will use later to define hypergeometric-type sequences. Our notation is closer to that used in \cite[Section 2.2, page 108]{KauersDfinite}, where $\chi_{\{n\equiv j\mmod m\}}$ is written as $\delta_{n\mmod m, j}$. However, there, $m$-fold indicator sequences are only considered when they have the same characteristic, in which case, their sums and products are straightforward to deduce.

We will now see what happens when we add and multiply $m$-fold indicator sequences of arbitrary characteristics and remainders. Since $0$ is not a divisor of any integer and that $1$ divides them all, we conventionally note $(\chi_{\{0\mmod 0\}})_n\coloneqq (\chi_{\{\mmod 0\}})_n$ the zero sequence $(0,0,\ldots)$, and $(\chi_{\{0\mmod 1\}})_n\coloneqq(\chi_{\{\mmod 1\}})_n$ the one sequence $(1,1,\ldots)$. These conventions make the next statements of this subsection more precise.

\begin{proposition}[Sum of $m$-fold indicator sequences]\label{prop:summfoldind} The sum of two non-zero $m$-fold indicator sequences is not an $m$-fold indicator sequence.
\end{proposition}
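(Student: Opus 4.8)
The plan is to argue by contradiction: assume the sum $(t)_n \coloneqq \chi_{\{j_1 \mmod m_1\}} + \chi_{\{j_2 \mmod m_2\}}$ of two non-zero $m$-fold indicator sequences is itself an $m$-fold indicator sequence, and derive a contradiction. The starting observation is that every $m$-fold indicator sequence is $\{0,1\}$-valued by \Cref{def:defmfoldind}. Since $(t)_n$ is a sum of two $\{0,1\}$-valued sequences it takes values in $\{0,1,2\}$, so the assumption forces it never to equal $2$; equivalently, the two summands are never simultaneously $1$. Writing $[j]_m$ for the set of non-negative integers congruent to $j$ modulo $m$ (as in the proof of \Cref{prop:uniqm}), this says the supports $[j_1]_{m_1}$ and $[j_2]_{m_2}$ are disjoint. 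The first step therefore disposes of the easy case: if the two supports meet -- which by the Chinese Remainder Theorem happens exactly when $\gcd(m_1,m_2)\mid j_1-j_2$ -- then $(t)_n=2$ at any common index, so $(t)_n$ is not $\{0,1\}$-valued and cannot be an $m$-fold indicator sequence.

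The substance lies entirely in the disjoint-support case, which I expect to be the main obstacle. Here I would suppose $(t)_n=\chi_{\{j\mmod m\}}$ with support $[j]_m$, so that $[j]_m=[j_1]_{m_1}\sqcup[j_2]_{m_2}$. In particular each summand coset is contained in $[j]_m$, and the elementary containment criterion $[j_i]_{m_i}\subseteq[j]_m \iff m\mid m_i \text{ and } j_i\equiv j \mmod m$ gives $m\mid m_1$ and $m\mid m_2$. Comparing natural densities (equivalently, counting the elements of each set in one window of length $\lcm(m_1,m_2)$) then yields $\tfrac1m=\tfrac1{m_1}+\tfrac1{m_2}$; writing $m_1=am$ and $m_2=bm$ with $a,b\in\NN$ this collapses to the Diophantine equation $\tfrac1a+\tfrac1b=1$, whose only positive-integer solution is $a=b=2$. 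Thus every constraint funnels the disjoint case into the single surviving configuration $m_1=m_2=2m$ with $\{j_1,j_2\}\equiv\{j,j+m\}\mmod{2m}$.

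The crux -- and the part I flag as genuinely delicate -- is precisely this surviving configuration, where the two supports are the complementary halves $[j]_{2m}$ and $[j+m]_{2m}$ of the coarser coset $[j]_m$ and do partition it (the smallest instance being the even and odd naturals, whose union is all of $\NN=[0]_1$). So the disjoint case does not close by the density bookkeeping alone: the decisive step is to confront this complementary-halves configuration directly and argue that it cannot arise under the hypotheses. I expect this to be where the argument turns, since it is exactly the point at which the statement as worded is most exposed, and I would prioritise settling the status of the complementary pair $\chi_{\{j\mmod 2m\}}+\chi_{\{j+m\mmod 2m\}}$ before regarding the proposition as established.
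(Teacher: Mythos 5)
Your instinct in the closing paragraph is exactly right, and you should follow it to its conclusion: the complementary-halves configuration you isolated is not a delicate residual case still to be excluded --- it is a genuine counterexample, so no argument can close your disjoint-support case and the proposition as stated is false. Concretely, $\mfoldind{0}{4}+\mfoldind{2}{4}=\mfoldind{0}{2}$: both summands are non-zero $m$-fold indicator sequences, and their sum is a $2$-fold indicator sequence; under the paper's conventions even $\mfoldind{0}{2}+\mfoldind{1}{2}=\mfoldind{0}{1}$, the one sequence, is of this kind. Your density bookkeeping is itself complete and correct (the containment criterion gives $m\mid m_1$, $m\mid m_2$; disjointness plus counting in a window of length $\lcm(m_1,m_2)$ gives $\tfrac1m=\tfrac1{m_1}+\tfrac1{m_2}$, forcing $m_1=m_2=2m$ and $\{j_1,j_2\}\equiv\{j,j+m\}\mmod{2m}$), so what you have actually proved is the sharp classification: the sum of two non-zero $m$-fold indicator sequences $(\mfoldind[]{j_1}{m_1})_n$ and $(\mfoldind[]{j_2}{m_2})_n$ is again an $m$-fold indicator sequence if and only if $m_1=m_2$ is even, say $m_1=2m$, and $j_2\equiv j_1+m\mmod{2m}$, in which case the sum is the indicator sequence of the coset $[j_1]_m$. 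That is the statement you should record in place of \Cref{prop:summfoldind}.

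For comparison, the paper's own proof breaks down at precisely the point you flagged. It first argues that the two sequences must coincide at some index, where the sum takes the value $2$, asserting that this case ``necessarily happens'' --- false, as the disjoint supports $[0]_4$ and $[2]_4$ show. Its fallback for the disjoint case --- that the sum would then be an $m$-fold indicator sequence ``with two characteristics'', contradicting \Cref{prop:uniqm} --- is a non sequitur: in the counterexample the sum is $\mfoldind{0}{2}$, with the single, perfectly unique characteristic $2$. So your proposal is not merely a different route; it exposes that the paper's proof is unsound and the proposition needs either the extra hypothesis that the two cosets are not complementary halves of a common coarser coset, or the if-and-only-if reformulation above. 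The damage is fortunately local: the proof of \Cref{th:ringtheo} never invokes \Cref{prop:summfoldind} (sums of indicator terms are simply kept as linear combinations in \eqref{eq:sumproof}, and products are handled by \Cref{lem:prodlem1} and \Cref{lem:prodmfold}), so nothing downstream depends on the flawed statement.
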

\begin{proof} If there are indices where the two $m$-fold indicator sequences take the same value $1$, then their sum takes the value $2$ at those indices and thus cannot be an $m$-fold indicator sequence. We could stop here as this first case necessarily happens for two $m$-fold indicator sequences. On the other hand, assuming that there are no indices where the two $m$-fold indicator sequences take the value $1$ and that it is $m$-fold indicator would imply that their sum is an $m$-fold indicator sequence with two characteristics, which is impossible according to \Cref{prop:uniqm}. \CQFD
\end{proof}
For products of $m$-fold indicator sequences, let us first look at some illustrative examples.

\begin{example}[Product of $m$-fold indicator sequences, part I]\label{ex:prodmfold1} Let us examine the product $(\chi_{\{1\mmod 4\}})_n\cdot (\chi_{\{1\mmod 6\}})_n$. The following table presents some of the first indices where the terms of both sequences are $1$, with their coincidences colored in red.

\vspace{-0.3cm}	

\begin{table}[ht]
\centering
\begin{tblr}{
  column{2} = {fg=red},
  column{6} = {fg=red},
  column{10} = {fg=red},
  column{14} = {fg=red},
}
$\mathbf{n \equiv 1\mmod 4}$ & 1 & 5 &   & 9 & 13 & 17 &    & 21 & 25 & 29 &    & 33 & 37 \\
$\mathbf{n \equiv 1\mmod 6}$ & 1 &   & 7 &   & 13 &    & 19 &    & 25 &    & 31 &    & 37 
\end{tblr}.
\caption{Indices where $\mfoldind{1}{4}=1$ and $\mfoldind{1}{6}=1$ for $n\leq 37$. The red color is for the coincidences.}
\label{tab:tab1}
\end{table}

\vspace{-0.4cm}

From \Cref{tab:tab1} one observes that the term $\chi_{\{n\equiv 1\mmod 4\}} \cdot \chi_{\{n\equiv 1\mmod 6\}}$ is apparently $1$ periodically. The period corresponds to the arithmetic progression $12n+1=\lcm(4,6)\cdot n+1$, where $\lcm$ stands for least common multiple. \QEDA
\end{example}

It turns out that the observation in \Cref{ex:prodmfold1} hides a general fact partly established by the following lemma.
\begin{lemma}\label{lem:prodlem1} Let $m_1,m_2\in\NN\setminus \{0\}$, and $j\in\{0,1,\ldots,\min\{m_1,m_2\}-1\}$. Then
\begin{equation}\label{eq:sameremaind}
    \mfoldind{j}{m_1} \cdot \chi_{\{n\equiv j \mmod m_2\}} = \chi_{\{n\equiv j \mmod \lcm(m_1,m_2)\}}.
\end{equation}
In other words, the product of two $m$-fold indicator sequences of the same remainder is an $m$-fold indicator sequence of this same remainder.
\end{lemma}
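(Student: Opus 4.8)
The plan is to translate the pointwise product of the two indicator sequences into the indicator of an intersection of index sets, and then to identify that intersection using the defining property of the least common multiple. First I would recall the elementary fact that for any two subsets $A,B\subseteq\NN$, the pointwise product of their indicator functions is the indicator function of $A\cap B$: a product of two $\{0,1\}$-values equals $1$ exactly when both factors are $1$. Applying this with $A=\{n\in\NN:n\equiv j\mmod m_1\}$ and $B=\{n\in\NN:n\equiv j\mmod m_2\}$ reduces the claim to the set identity $A\cap B=\{n\in\NN:n\equiv j\mmod \lcm(m_1,m_2)\}$.

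Next I would rewrite membership in $A\cap B$ in divisibility terms: by definition $n\in A\cap B$ holds precisely when $m_1\mid(n-j)$ and $m_2\mid(n-j)$. The one substantive ingredient is then the standard characterization of the least common multiple, namely that an integer $k$ is divisible by both $m_1$ and $m_2$ if and only if it is divisible by $\lcm(m_1,m_2)$. Taking $k=n-j$ collapses the conjunction of the two congruences into the single congruence $n\equiv j\mmod \lcm(m_1,m_2)$, which is exactly the index set of the right-hand side of \eqref{eq:sameremaind}.

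Finally, a short bookkeeping check is needed to ensure the right-hand side is well-formed as an $m$-fold indicator sequence in the sense of \Cref{cor:defmfremainder}, i.e.\ that the remainder $j$ is a legitimate residue modulo $\lcm(m_1,m_2)$. This follows immediately from the hypothesis $j\in\{0,\ldots,\min\{m_1,m_2\}-1\}$ together with the inequality $\min\{m_1,m_2\}\le\lcm(m_1,m_2)$. I do not anticipate a genuine obstacle in this argument: the lcm divisibility property does all the work, and the restriction $j<\min\{m_1,m_2\}$ plays no role beyond guaranteeing that $j$ remains a valid remainder for the combined modulus.
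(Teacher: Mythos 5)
Your proposal is correct and follows essentially the same route as the paper: both arguments reduce to the fact that an integer is divisible by both $m_1$ and $m_2$ if and only if it is divisible by $\lcm(m_1,m_2)$, applied to $n-j$. Your phrasing via the intersection of index sets is if anything slightly tidier, since it establishes both inclusions at once, whereas the paper's proof explicitly shows only the forward containment and then appeals to the uniqueness property of \Cref{prop:uniqm}.
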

\begin{proof} $(\mfoldind[]{j}{m_1})_n$ is the indicator sequence of $[j]_{m_1}$ and $(\mfoldind[]{j}{m_2})_n$ is the one of $[j]_{m_2}$ (this notation was introduced in \Cref{prop:uniqm}). Thus for all non-negative integers $n$, $\mfoldind{j}{m_1}\cdot \mfoldind{j}{m_2}$ is $1$ if and only if $n\equiv j \mmod m_1$ and $n\equiv j \mmod m_2$. This implies that there exist non-negative integers $k_1$ and $k_2$, such that $n=m_1\cdot k_1 + j = m_2\cdot k_2 + j$. So $m_1\cdot k_1 = m_2\cdot k_2$. Thus $m_1 \mid m_2\cdot k_1$ ($m_1$ divides $m_2\cdot k_1$) and $m_2 \mid m_1\cdot k_1$. Let $\mu=\lcm(m_1,m_2)$. Then $\mu \mid m_1\cdot k_1$ and $\mu \mid m_2\cdot k_2$ since both $m_1$ and $m_2$ divide $m_1\cdot k_1$ and $m_2\cdot k_2$. Therefore $n\equiv j \mmod \mu$ and we conclude that $\mfoldind{j}{m_1}\cdot\mfoldind{j}{m_2}=\mfoldind{j}{\mu}$ by the uniqueness property (see \Cref{prop:uniqm}).
\CQFD
\end{proof}
Thus, we are now certain that $\mfoldind{1}{4}\cdot\mfoldind{1}{6}=\mfoldind{1}{12}$ for all $n\in\NN$. Let us consider the case of different remainders and generalize \Cref{lem:prodlem1}.
\begin{example}[Product of $m$-fold indicator sequences, part II]\label{ex:prodmfold2} We consider the product $(\chi_{\{1\mmod 4\}})_n\cdot (\chi_{\{2\mmod 6\}})_n$. The following table presents some of the first indices where the terms of both sequences are $1$, showing no coincidence for $n\leq 38$.
	
\vspace{-0.5cm}

\begin{table}
\centering
\begin{tblr}{}
$\mathbf{n \equiv 1\mmod 4}$ & 1 &   & 5 &   & 9 & 13 &     & 17 &    & 21 & 25 &    & 29 &    & 33 & 37 &  \\
$\mathbf{n \equiv 2\mmod 6}$ &   & 2 &   & 8 &   &    & 14  &    & 20 &    &    & 26 &    & 32 &    &    & 38 
\end{tblr}.
\caption{Indices where $\mfoldind{1}{4}=1$ and $\mfoldind{2}{6}=1$ for $n\leq 38$. No coincidence occurs.}
\end{table}

\vspace{-0.5cm}

We claim that $\mfoldind{1}{4}\cdot \mfoldind{2}{6}=0=\mfoldind{0}{0}$ for all $n\in\NN$. \QEDA
\end{example}

\begin{example}[Product of $m$-fold indicator sequences, part III]\label{ex:prodmfold3} We want to find $(\mfoldind[]{3}{4})_n\cdot (\mfoldind[]{2}{5})_n$. We consider the terms of indices $n\leq 67$.
	
\vspace{-0.5cm}

\begin{table}[h!]
\centering
\begin{tblr}{
  column{4} = {fg=red},
  column{12} = {fg=red},
}
$\mathbf{n \equiv 3 \mmod 4}$  &    & 3 & 7 & 11 &    & 15 &    & 19 &    & 23 & 27 & 31 &    & 35 &    & 39 \\
$\mathbf{n \equiv 2 \mmod 5}$  & 2  &   & 7 &    & 12 &    & 17 &    & 22 &    & 27 &    & 32 &    & 37 &    
\end{tblr}
\begin{tblr}{
  column{4} = {fg=red},
  column{12} = {fg=red},
}
$\mathbf{n \equiv 3 \mmod 4}$ &    & 43  & 47 & 51 &    & 55 &    & 59 &     & 63 & 67\\
$\mathbf{n \equiv 2 \mmod 5}$ & 42 &     & 47 &    & 52 &    & 57 &    &  62 &    & 67
\end{tblr}.
\caption{Indices where $\mfoldind{3}{4}=1$ and $\mfoldind{2}{5}=1$ for $n\leq 67$. Four coincidences: $n=7$, $n=27$, $n=47$, and $n=67$.}
\end{table}

\vspace{-1cm}

Claim: $\mfoldind{3}{4}\cdot \mfoldind{2}{5}=\mfoldind{7}{\lcm(4,5)}=\mfoldind{7}{20}$ for all $n\in\NN$. \QEDA
\end{example}

Together with \Cref{lem:prodlem1}, the following statement play a crucial role in proving one of our main results. They establish that the set of $m$-fold indicator sequences is multiplicatively closed.

\begin{lemma}[Product of $m$-fold indicator sequences]\label{lem:prodmfold} The product of two distinct $m$-fold indicator sequences of distinct remainders is an $m$-fold indicator sequence.
\end{lemma}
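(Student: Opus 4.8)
The plan is to recognize the product as the indicator of a simultaneous system of two linear congruences and to settle it via the Chinese Remainder Theorem in its form for non-coprime moduli. Write the two factors as $(\mfoldind[]{j_1}{m_1})_n$ and $(\mfoldind[]{j_2}{m_2})_n$ with $j_1\neq j_2$. By the definition of $m$-fold indicator sequences, the product term $\mfoldind{j_1}{m_1}\cdot\chi_{\{n\equiv j_2\mmod m_2\}}$ equals $1$ at an index $n$ precisely when $n\equiv j_1\mmod m_1$ and $n\equiv j_2\mmod m_2$ hold simultaneously, and equals $0$ otherwise. Thus the whole question reduces to describing the solution set in $\NN$ of this pair of congruences.

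First I would settle solvability. The criterion for the system $n\equiv j_1\mmod m_1$, $n\equiv j_2\mmod m_2$ is that it admits a solution if and only if $\gcd(m_1,m_2)\mid(j_1-j_2)$. I would prove this directly in the style of \Cref{lem:prodlem1}: a common solution forces $m_1 k_1 + j_1 = m_2 k_2 + j_2$ for some $k_1,k_2\in\NN$, whence $j_1-j_2 = m_2 k_2 - m_1 k_1$ is a multiple of $\gcd(m_1,m_2)$; conversely, solving the linear Diophantine equation $m_1 x - m_2 y = j_2 - j_1$ via B\'ezout's identity constructs such an $n$ when the divisibility holds. If the criterion fails, then no index makes both factors equal to $1$, so the product is the zero sequence, which by our convention is $(\mfoldind[]{}{0})_n$ -- a $0$-fold indicator sequence -- and we are done.

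When the criterion holds, the second step is to show the solution set is exactly one residue class modulo $\mu\coloneqq\lcm(m_1,m_2)$. Repeating the divisibility argument of \Cref{lem:prodlem1}: if $n$ and $n'$ are two solutions then $n-n'$ is divisible by both $m_1$ and $m_2$, hence by $\mu$, so all solutions are congruent modulo $\mu$; conversely any $n'\equiv n\mmod\mu$ inherits both original congruences. Letting $j\in\{0,\ldots,\mu-1\}$ be the common remainder of these solutions, the product equals $1$ exactly on $[j]_\mu$, i.e. $\mfoldind{j_1}{m_1}\cdot\chi_{\{n\equiv j_2\mmod m_2\}} = \chi_{\{n\equiv j\mmod\mu\}}$. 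Uniqueness of the characteristic (\Cref{prop:uniqm}) makes this description unambiguous. In either branch the product is an $m$-fold indicator sequence, as claimed; the examples $\mfoldind[]{1}{4}\cdot\mfoldind[]{2}{6}=\mfoldind[]{}{0}$ and $\mfoldind[]{3}{4}\cdot\mfoldind[]{2}{5}=\mfoldind[]{7}{20}$ correspond respectively to the unsolvable and solvable cases.

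I expect the main obstacle to be the non-coprime bookkeeping of the solvable branch: unlike the classical coprime statement, one must justify both the $\gcd$ solvability condition and that the solutions collapse to a single class modulo $\lcm(m_1,m_2)$ rather than modulo the product $m_1 m_2$. Extracting the explicit remainder $j$, and checking the boundary conventions for the $0$- and $1$-fold cases, is then routine once these two facts are in hand.
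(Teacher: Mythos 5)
Your proof is correct, but it follows a genuinely different route from the paper's. The paper defines the finite set $\mathcal{N}$ of residues $j<\mu\coloneqq\lcm(m_1,m_2)$ satisfying both congruences, and then argues entirely within that window: emptiness of $\mathcal{N}$ propagates to all of $\NN$ (giving the zero sequence), and $|\mathcal{N}|\leq 1$ is proved by a self-contained descent — take the least element $j_0$, show any other element $j_0'$ yields $j_0'-j_0\in\mathcal{N}$ and, via Euclidean division, a remainder $r<j_0$ in $\mathcal{N}$, a contradiction. You instead invoke the generalized Chinese Remainder Theorem for non-coprime moduli: solvability iff $\gcd(m_1,m_2)\mid(j_1-j_2)$ via B\'ezout, followed by the observation that any two solutions differ by a multiple of both moduli, hence of $\mu$. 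The paper explicitly acknowledges your route as an alternative immediately after its proof, so this is a recognized variant rather than an oversight. What each buys: your argument makes the solvability criterion explicit and decidable by the extended Euclidean algorithm, whereas the paper's proof avoids B\'ezout entirely and doubles as a direct search procedure over $\{0,\ldots,\mu-1\}$, which is what its implementation uses. One small point to tighten in your write-up: in the solvable branch you should note that a B\'ezout solution of $m_1x-m_2y=j_2-j_1$ need not give a non-negative $n$ directly, but shifting by a suitable multiple of $\mu$ lands the solution in $\NN$, after which the single-residue-class argument applies; with that remark both branches are complete.
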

\begin{proof} Let $\mfoldind[]{j_1}{m_1}$ and $\mfoldind[]{j_2}{m_2}$ be two $m$-fold indicator sequences such that $j_1\neq j_2$. Let $\mu=\lcm(m_1,m_2)$ and
\begin{dmath}\label{eq:mathcalN}
    \mathcal{N}\coloneqq \left\{j\in\NN:~ j<\mu~ \text{ and there exist}~j_1,j_2\in\NN,j_1<m_1,j_2<m_2,~ \begin{cases}j\equiv j_1 \mmod m_1,\\ j\equiv j_2 \mmod m_2.
    \end{cases}  \right\}.
\end{dmath}
We consider two cases: $\mathcal{N}=\emptyset$ and $\mathcal{N}\neq \emptyset$.
\begin{itemize}
    \item[Case 1:] $\mathcal{N} = \emptyset$. This means that for indices less that $\mu$, there is no coincidence of $1$ between $\mfoldind{j_1}{m_1}$ and $\mfoldind{j_2}{m_2}$. We show that when this happens, the coincidence will not occur, and thus, the corresponding product of $m$-fold indicator sequences is the zero sequence. Let $n\in\NN$ such that $n\equiv j_1\mmod m_1 \equiv j_2\mmod m_2$. So there exist $k_1,k_2\in\NN,$ such that $m_1\cdot k_1 + j_1 = m_2\cdot k_2 + j_2$. Since $\mathcal{N}=\emptyset$, there exist $k_3,j_3\in\NN$, $k_3>0$ and $j_3< \mu$ such that $n=\mu\cdot k_3 + j_3$. Thus we have
    \[ m_1\cdot k_1 + j_1 = m_2\cdot k_2 + j_2 = \mu\cdot k_3 + j_3. \]
    We obtain a contradiction since this implies $j_3\in\mathcal{N}$. Therefore if $\mathcal{N}=\emptyset$ then $\mfoldind{j_1}{m_1}\cdot \mfoldind{j_2}{m_2}=0$ for all $n\in\NN$.
    \item[Case 2:] $\mathcal{N}\neq \emptyset$. To prove that the product is an $m$-fold indicator sequence, we only need to show that $|\mathcal{N}|=1$, i.e., $\mathcal{N}$ has only one element. We proceed by contradiction. As a subset of $\NN$, $\mathcal{N}$ has a least element. Let $j_0$ be that element. Then, following the reasoning of the first case, we can find integers $k_1,k_2$ such that
    \[j_0 = m_1\cdot k_1 + j_1 = m_2\cdot k_2 + j_2.\]
    Let $j_0'\in\mathcal{N}, j_0'\neq j_0$. Then $j_0'>j_0$ and we can find $k_1'$ and $k_2'$ such that
    \[j_0' = m_1\cdot k_1' + j_1' = m_2\cdot k_2' + j_2'.\]
    Thus
    \begin{eqnarray*}
        j_0'-j_0 &=& m_1(k_1'-k_1) + j_1'-j_1\\
                 &=& m_2(k_2'-k_2) + j_2'-j_2.
    \end{eqnarray*}
    Hence $j_0'-j_0\in\mathcal{N}$. In fact, for every $l\in\NN$, if $j_0'-l\cdot j_0\geq 0$, then $j_0'-l\cdot j_0\in\mathcal{N}$. By Euclidean division, let us write $j_0'=j_0\cdot q + r$, $0\leq r <j_0$. Then $r=j_0'- j_0\cdot q \in \mathcal{N}$, contradicting the fact that $j_0$ is the smallest element in $\mathcal{N}$. Therefore if $\mathcal{N}\neq\emptyset$ then 
    \[\mfoldind{j_1}{m_1}\cdot \mfoldind{j_2}{m_2}=\mfoldind{j_0}{\mu}\] 
    for all $n\in\NN$, where $j_0$ is the unique element of $\mathcal{N}$.
\end{itemize}
\CQFD
\end{proof}
One can also prove \Cref{lem:prodmfold} using the \textit{Chinese Remainder Theorem}. The proof of \Cref{lem:prodmfold} is constructive and gives an algorithmic way to find products of $m$-fold indicator sequences. Remark that they form a multiplicative group where the one sequence $\mfoldind[]{0}{1}$ is the unit element.
\begin{example}\item
\begin{itemize}
    \item $(\mfoldind[]{1}{4})_n\cdot (\mfoldind[]{2}{6})_n = (\mfoldind[]{}{0})_n=0$.
    \item $(\mfoldind[]{3}{4})_n\cdot (\mfoldind[]{2}{5})_n = (\mfoldind[]{7}{20})_n$.
    \item $(\mfoldind[]{1}{2})_n\cdot (\mfoldind[]{1}{3})_n = (\mfoldind[]{1}{6})_n=0$.
\end{itemize}
\QEDA
\end{example}

\subsection{Definition and structure}

We are now ready to define hypergeometric-type sequences. The idea is to encompass every possible linear combination of interlaced hypergeometric terms we can think of.
\begin{definition}[Hypergeometric-type sequence]\label{def:hyptypseq} A sequence $(s)_n$ is said to be of hypergeometric type if there exist finitely many $m$-fold indicator sequences $(\mfoldind[]{j_1}{m_1})_n,\ldots,(\mfoldind[]{j_l}{m_l})_n$ such that its general term $s_n$ writes
\begin{equation}\label{eq:hyptypseq}
    s_n = H_1(\sigma_1(n))\cdot \mfoldind{j_1}{m_1} + H_2(\sigma_2(n))\cdot \mfoldind{j_2}{m_2} + \cdots + H_l(\sigma_l(n))\cdot \mfoldind{j_l}{m_l},
\end{equation}
where $\sigma_i\colon \NN \longrightarrow \QQ$ is such that $\sigma_i(m_i\cdot n + j_i)\in\NN$, and $H_i(n)$ is a $\KK$-linear combination of hypergeometric terms, $i=1,\ldots,l$. We call the $H_i$'s the coefficients of $s_n$ (or $(s)_n$).
\end{definition}
\begin{remark}\item 
\begin{itemize}
    \item In \Cref{def:hyptypseq}, we used $H_i(\sigma(n))$ instead of ${H_i}_{\sigma(n)}$ to ease the notation and avoid confusion with indices.
    \item The specification of ``finitely many'' is mainly considered for algorithmic computation, though it seems unfeasible to envision arithmetic operations when the sum in \eqref{eq:hyptypseq} is infinite.    
    \item When one of the $m$-fold indicator terms in \eqref{eq:hyptypseq} is $\mfoldind[]{0}{1}$, the corresponding summand is replaced by its coefficient.
\end{itemize}
\end{remark}

Let $(\mathcal{H}_T)$ denote the set of hypergeometric-type sequences, and $\mathcal{H}_T$ be the set of their general terms. Unless otherwise stated, we assume that if $(s)_n\in(\mathcal{H}_T)$ then $s_n\in\mathcal{H}_T$. Every hypergeometric sequence is of hypergeometric type; in particular, every polynomial and rational sequence is of hypergeometric type.
\begin{example}[An example from the OEIS \cite{sloane2003line}]\label{ex:oeis1} The general term, say $a_n$, of the sequence \href{https://oeis.org/A307717}{A307717} counts the number of palindromic squares, $n^2$, of length $n$ (in the decimal basis) such that $n$ is also palindromic. Its explicit formula (see for instance \cite{kauers2022guessing,teguia2022fps}) is given by
\begin{equation}
    a_n \coloneqq \begin{cases} 0\phantom{195 + 203n- n^2}\quad \text{ if } n\equiv 0 \mmod 2\\
                              \frac{195 + 203n - 15n^2 + n^3}{192} \quad \text{ if } n\equiv 1 \mmod 4\\
                              \frac{501 + 107n - 9n^2 + n^3}{384}\phantom{5}  \quad \text{ if } n\equiv 3 \mmod 4
                    \end{cases}.
\end{equation}
The sequence $(a)_n$ is, of course, of hypergeometric type since its general term can be written as
\begin{equation}\label{eq:oeis1}
a_n = \frac{195 + 203n - 15n^2 + n^3}{192} \mfoldind{1}{4} + \frac{501 + 107n - 9n^2 + n^3}{384} \mfoldind{3}{4}.
\end{equation}
\QEDA
\end{example}

\begin{example}\label{ex:sumprod1} Let us consider two hypergeometric terms $h_n$ and $g_n$. By using subsequences of $(h)_n$ and $(g)_n$, we can construct several hypergeometric-type sequences. For instance, the two general terms
\begin{align}
    &u_n \coloneqq h_{(n-1)/4}\mfoldind{1}{4} + h_{(n-2)/5}\mfoldind{2}{5},\\
    &v_n \coloneqq g_{(n-2)/6}\mfoldind{2}{6} + g_{(n-3)/4}\mfoldind{3}{4},
\end{align}
are of hypergeometric type. Moreover, by definition, their sum
\begin{dmath}\label{eq:sumhyp1}
    u_n+v_n = h_{(n-1)/4}\mfoldind{1}{4} + g_{(n-3)/4}\mfoldind{3}{4} + h_{(n-2)/5}\mfoldind{2}{5} + g_{(n-2)/6}\mfoldind{2}{6},
\end{dmath}
is also of hypergeometric type. For their product, using \Cref{lem:prodmfold}, one can easily show that
\begin{dmath}\label{eq:prodhyp1}
    u_nv_n = h_{(n-2)/5} g_{(n-3)/4} \mfoldind{7}{20}.
\end{dmath}
Thus their product is also of hypergeometric type since the product of hypergeometric terms is a hypergeometric term. 
\QEDA
\end{example}
The previous example illustrates a general fact concerning the closure properties of hypergeometric-type sequences. We establish it in the following theorem.
\begin{theorem}\label{th:ringtheo} The set $(\mathcal{H}_T)$ of hypergeometric-type sequences is a ring.
\end{theorem}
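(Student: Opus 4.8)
The plan is to leverage the fact that $\KK^{\NN}$ is already a commutative ring under pointwise addition and multiplication; hence associativity, commutativity, and distributivity are inherited automatically. Consequently, the only assertions that require proof are that $(\mathcal{H}_T)$ contains the additive and multiplicative identities and that it is closed under the two operations. The zero sequence $(\mfoldind[]{}{0})_n$ and the one sequence $(\mfoldind[]{0}{1})_n$ both lie in $(\mathcal{H}_T)$ by the conventions fixed earlier (in the latter case with constant coefficient $1$, which is hypergeometric), so the identities are immediate.

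Closure under addition is essentially syntactic. Given two terms written in the form of \eqref{eq:hyptypseq}, their sum is obtained by concatenating the two defining lists of summands, which is again an expression of the shape \eqref{eq:hyptypseq}, exactly as illustrated in \eqref{eq:sumhyp1}. No further argument is needed, so the substance of the theorem lies in closure under multiplication.

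For the product, I would write $s_n = \sum_{i} H_i(\sigma_i(n))\,\mfoldind{j_i}{m_i}$ and $t_n = \sum_{k} G_k(\tau_k(n))\,\mfoldind{j'_k}{m'_k}$, expand, and treat each cross term $H_i(\sigma_i(n))\,G_k(\tau_k(n))\cdot\mfoldind{j_i}{m_i}\cdot\mfoldind{j'_k}{m'_k}$ separately. By \Cref{lem:prodlem1} and \Cref{lem:prodmfold}, the product of the two indicator sequences is again an $m$-fold indicator sequence $\mfoldind{J}{M}$ with $M=\lcm(m_i,m'_k)$ (or the zero sequence, which contributes nothing and may be discarded). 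It then remains to verify that on the support of this indicator, the arithmetic progression $n=Mn'+J$, the coefficient $H_i(\sigma_i(n))\,G_k(\tau_k(n))$ is again a $\KK$-linear combination of hypergeometric terms in the reduced index $n'$. The crux is the single observation that the restriction of a hypergeometric term to an arithmetic progression $n'\mapsto a n'+b$ (with $a\in\NN\setminus\{0\}$, $b\in\NN$) is itself a hypergeometric term, since its consecutive-term ratio is a finite product of rational functions, hence rational. On the coset one has $J\equiv j_i\mmod m_i$ (otherwise the indicator vanishes), so the canonical reindexing $\sigma_i(Mn'+J)=(M/m_i)\,n'+(J-j_i)/m_i$ is an arithmetic progression in $n'$ with nonnegative integer slope and intercept, and likewise for $\tau_k$. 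Therefore both $H_i(\sigma_i(Mn'+J))$ and $G_k(\tau_k(Mn'+J))$ are $\KK$-linear combinations of hypergeometric terms in $n'$; since the product of two hypergeometric terms is hypergeometric, distributing the product of these two combinations yields another such combination. Collecting the finitely many nonzero cross terms then exhibits $s_n t_n$ in the form \eqref{eq:hyptypseq}, as already previewed in \eqref{eq:prodhyp1}.

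I expect the main obstacle to be bookkeeping rather than conceptual difficulty: one must track carefully how the index maps $\sigma_i$ and $\tau_k$ transform when the coarse cosets modulo $m_i$ and $m'_k$ are refined to the single coset modulo $\lcm(m_i,m'_k)$, and confirm that the integrality and nonnegativity constraints required of a valid reindexing in \Cref{def:hyptypseq} survive this refinement. Once the subsequence-closure of hypergeometric terms along arithmetic progressions is in hand, every remaining step reduces to distributing sums and products and regrouping by the resulting $\lcm$-cosets.
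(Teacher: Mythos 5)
Your proposal is correct and follows essentially the same route as the paper: reduce everything to closure under addition (regrouping of summands) and multiplication (distribute, apply \Cref{lem:prodlem1} and \Cref{lem:prodmfold} to the indicator factors, and use that products of hypergeometric terms are hypergeometric). The only difference is that you spell out the reindexing step --- that restricting a hypergeometric term to an arithmetic progression $n'\mapsto an'+b$ again yields a hypergeometric term, so the coefficient attached to the refined coset modulo $\lcm(m_i,m_k')$ is still a $\KK$-linear combination of hypergeometric terms --- a detail the paper's proof asserts implicitly rather than verifies, so your version is, if anything, slightly more complete.
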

\begin{proof}
Let $(s)_n,(s')_n\in(\mathcal{H}_T)$, and denote by $\mathfrak{M}$ and $\mathfrak{M}'$ the sets of $m$-fold indicator terms occurring in $s_n$ and $s_n'$, respectively. So we have
\begin{align*}
    &s_n = \sum_{\chi_i\in\mathfrak{M}} H_i(\sigma_i(n)) \chi_i(n),\\
    &s_n'= \sum_{\chi_i'\in\mathfrak{M}'} H_i'(\sigma_i'(n)) \chi_i'(n).
\end{align*}
Then, we can write
\begin{dmath}\label{eq:sumproof}
    s_n+s_n'= \sum_{\chi_i\in\mathfrak{M}\cap \mathfrak{M}'} \left(H_i(\sigma_i(n)) +  H_i'(\sigma_i'(n))\right)\chi_i(n) + \sum_{\chi_i\in\mathfrak{M}\setminus \mathfrak{M}'} H_i(\sigma_i(n)) \chi_i(n) + \sum_{\chi_i'\in\mathfrak{M}'\setminus \mathfrak{M}} H_i'(\sigma_i'(n)) \chi_i'(n) \in\mathcal{H}_T.
\end{dmath}
Hence $(s+s')_n\in(\mathcal{H}_T)$.

By distributivity of the multiplication with respect to addition, the product $s_n\cdot s_n'$ yields a sum of terms of the form
\begin{equation}\label{eq:prodproof}
    H_i(\sigma_i(n))\cdot H_j'(\sigma_j'(n)) \cdot \chi_i(n)\cdot\chi_j'(n),\quad \chi_i\in\mathfrak{M}, \chi_j'\in\mathfrak{M}'.
\end{equation}
Since the product of hypergeometric terms is a hypergeometric term, and that from \Cref{lem:prodlem1} and \Cref{lem:prodmfold} we know that $\chi_i(n)\cdot\chi_j'(n)$ is an $m$-fold indicator term, say $\chi_{i,j}''$, we deduce that \eqref{eq:prodproof} can be written as $H_{i,j}''(\sigma_{i,j}''(n)) \chi_{i,j}''(n)$, where $H_{i,j}''$ is a linear combination of hypergeometric terms. Hence $(s\cdot s')_n\in(\mathcal{H}_T)$. 

In conclusion $(\mathcal{H}_T)$ is a subring of $\KK^{\NN}$.
\CQFD
\end{proof}

\subsection{Generating functions}

With their LLL-based technique of guessing, Kauers and Koutschan were able to find a $6$th-order holonomic recurrence equation of degree $9$ for the OEIS sequence \href{https://oeis.org/A307717}{A307717} of \Cref{ex:oeis1} from its first $70$ terms. We have seen that this sequence is of hypergeometric type. We want to prove that such an equation always exists for any hypergeometric-type sequence.
\begin{proposition}\label{prop:holonomicity} Every hypergeometric-type sequence is P-recursive.
\end{proposition}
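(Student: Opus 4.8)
The plan is to lean on the classical fact that the P-recursive (holonomic) sequences form a ring; in particular, finite sums of P-recursive sequences are again P-recursive (see, e.g., \cite[Chapter~4]{KauersDfinite}). By \Cref{def:hyptypseq}, a hypergeometric-type term is a \emph{finite} sum
\[
    s_n = \sum_{i=1}^{l} H_i(\sigma_i(n))\,\mfoldind{j_i}{m_i},
\]
so it suffices to prove that each summand $H_i(\sigma_i(n))\,\mfoldind{j_i}{m_i}$ is P-recursive and then add the resulting recurrences.

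First I would dispatch the two elementary building blocks. Each $m$-fold indicator sequence $(\mfoldind[]{j_i}{m_i})_n$ is periodic of period $m_i$ (this is exactly its defining property through \Cref{cor:defmfremainder}), hence it satisfies the constant-coefficient recurrence $x_{n+m_i}=x_n$ and is P-recursive. For the coefficient, I restrict attention to the support $n=m_i k+j_i$ of the indicator: there the hypothesis $\sigma_i(m_i n+j_i)\in\NN$ ensures $H_i$ is sampled at honest non-negative integers, and $k\mapsto\sigma_i(m_i k+j_i)$ is an affine index map. Hence $b_k\coloneqq H_i(\sigma_i(m_i k+j_i))$ is again a $\KK$-linear combination of hypergeometric terms, because a subsequence of a hypergeometric term along an arithmetic progression is hypergeometric (its term ratio stays rational in $k$). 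Being a finite linear combination of P-recursive sequences, $(b_k)_k$ is P-recursive, say annihilated by $\sum_t p_t(k)\,b_{k+t}=0$.

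It then remains to assemble these two facts, and this interlacing/masking step is where the only real care is needed. The summand $H_i(\sigma_i(n))\,\mfoldind{j_i}{m_i}$ equals the sequence $t_n$ that places $b_{(n-j_i)/m_i}$ at the positions $n\equiv j_i\mmod m_i$ and $0$ elsewhere. I would produce an annihilating recurrence for $t_n$ by \emph{inflating} the operator of $(b_k)_k$, namely substituting the shift $N\mapsto N^{m_i}$ and $k\mapsto (n-j_i)/m_i$, which yields
\[
    \sum_t p_t\!\left(\tfrac{n-j_i}{m_i}\right) t_{\,n+m_i t} = 0 .
\]
The coefficients $p_t((n-j_i)/m_i)$ are polynomials in $n$, so this is a genuine P-recurrence. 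The key observation that makes it valid for \emph{every} index $n$ -- not merely on the support -- is that a shift by a multiple of $m_i$ preserves the residue class modulo $m_i$: for $n\not\equiv j_i\mmod m_i$ all the terms $t_{\,n+m_i t}$ vanish and the relation holds trivially, while for $n\equiv j_i\mmod m_i$ it is precisely the inflated recurrence of $(b_k)_k$.

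The main obstacle, then, is essentially this bookkeeping around the inflation: one must confirm that passing from the recurrence of $(b_k)_k$ to that of $t_n$ keeps the coefficients polynomial and that the off-support indices are automatically annihilated. Once each summand is shown P-recursive in this way, summing over $i=1,\dots,l$ and invoking closure under addition gives a single P-recurrence for $s_n$, proving that every hypergeometric-type sequence is P-recursive. Equivalently, one could cite the closure of holonomic sequences under interlacing directly from \cite{KauersDfinite} and bypass the explicit construction.
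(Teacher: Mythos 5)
Your proof is correct and follows essentially the same route as the paper: for each summand, obtain a recurrence for the coefficient along the arithmetic progression $n=m_ik+j_i$ (using that a linear combination of hypergeometric terms restricted to such a progression is P-recursive), inflate it via $k\mapsto (n-j_i)/m_i$ so that off-support indices are annihilated trivially, and conclude by closure of P-recursive sequences under addition. If anything, your explicit check that the inflated operator kills the masked sequence at \emph{every} index is tidier than the paper's corresponding step of ``neglecting'' the cross-terms $\epsilon_i(n)$ and arguing that their span is covered by the combined system.
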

\begin{proof} We give sufficient arguments that show how to construct a holonomic recurrence equation satisfied by a given hypergeometric-type term. A basic example is given in \Cref{sec:holo}. Let $(s)_n\in(\mathcal{H}_T)$ such that
\[s_n\coloneqq \sum_{i=1}^l H_i(\sigma_i(n))\mfoldind{j_i}{m_i}.\]
Then for all $n\in\NN$, $1\leq i \leq l$, $s_{m_i n + j_i}=H_i(\sigma_i(m_i n + j_i))+\epsilon_i(n)$. We neglect $\epsilon_i(n)$ and look for a recurrence equation for $u_i(m_i n + j_i)\coloneqq H_i(\sigma_i(m_i n + j_i))$. Let $l_i$ be the number of $m_i$-fold hypergeometric terms in $u_i(m_i n + j_i)$. Since $u_i(m_i n + j_i)$ is a linear combination of $m_i$-fold hypergeometric terms, a recurrence equation of order at most $m_i\cdot l_i$ can be computed (see, e.g., \cite[Section 2]{BTvariant}). This yields a recurrence equation for the index $m_i n + j_i$. To obtain a recurrence in $n$, one substitutes $n$ by $(n-j_i)/m_i$ in the equation. Let us denote by $r_i$ the order of the resulting holonomic equation.

Notice that for all $1\leq i \leq l$, any $m_j$-fold hypergeometric terms in $\epsilon_i(n)$, $j\neq i$, is considered in one of the $u_k(n)$, $k=1,\ldots,l$. Therefore the span of all $m$-fold hypergeometric terms in $s_n$ is fully covered by the solution space of the system defined by the $l$ constructed holonomic recurrence equations.

Finally, using the addition closure property of P-recursive sequences, one can compute a holonomic equation of order at most $r_1+\cdots+r_l$ (see \cite{salvy1994gfun,mallinger1996algorithmic,koutschan2014holonomic,kauers2015ore}) satisfied by $s_n$. Hence $(s)_n$ is P-recursive. \CQFD
\end{proof}
From \Cref{prop:holonomicity}, we can say that the generating functions of hypergeometric-type sequences are D-finite functions (\cite{KauersDfinite}). However, we can be more specific. From \Cref{def:hyptypseries}, we can establish a natural link between hypergeometric-type sequences and proper hypergeometric-type power series \cite{BTphd,BTWKsymbconv}.
\begin{example} Let us consider the proper hypergeometric-type power series $\cos(z) + \sin(z)$. We have
\begin{align}
    \cos(z) + \sin(z) &\coloneqq \sum_{n=0}^{\infty} \frac{(-1)^n}{(2n)!}z^{2n} + \sum_{n=0}^{\infty} \frac{(-1)^{2n+1}}{(2n+1)!}z^{2n+1}\\
                      &= \sum_{n=0}^{\infty} \left(\frac{(-1)^{n/2}}{n!}\mfoldind{0}{2} + \frac{(-1)^{(n-1)/2}}{n!}\mfoldind{1}{2}\right) z^n.
\end{align}
Thus, the general coefficient of the power series of $\cos(z)+\sin(z)$ is a hypergeometric-type term. \QEDA
\end{example}

\begin{proposition}\label{prop:seriestoseq} There is a one-to-one correspondence between proper hypergeometric-type power series and hypergeometric-type sequences.
\end{proposition}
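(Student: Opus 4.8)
The plan is to realize the stated bijection as the canonical identification between a formal power series $S(z)=\sum_{n=0}^{\infty} s_n z^n$ and its coefficient sequence $(s)_n$, and then to verify that, under this identification, proper hypergeometric-type power series (\Cref{def:hyptypseries}) are sent exactly to hypergeometric-type sequences (\Cref{def:hyptypseq}). Since coefficient extraction $S(z)\mapsto (s)_n$ and generating-function formation $(s)_n\mapsto \sum_{n} s_n z^n$ are already mutually inverse bijections between $\KK[[z]]$ and $\KK^{\NN}$, it will suffice to prove that each of the two distinguished classes is mapped into the other; injectivity and surjectivity of the correspondence are then automatic, so no separate counting argument is required.

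For the forward direction I would begin with $S(z)=\sum_{j=1}^{J} S_j(z)$ and read off the coefficient of $z^n$. In a single block $S_j$, the summand indexed by $i$ contributes only when $n\equiv i \mmod m_j$, so the coefficient of $z^n$ in $S_j$ is $\sum_{i=0}^{m_j-1} a_i(n)\,\chi_{\{n\equiv i \mmod m_j\}}$; summing over $j$ gives $s_n=\sum_{j=1}^{J}\sum_{i=0}^{m_j-1} a_i(n)\,\chi_{\{n\equiv i \mmod m_j\}}$, which (after relabeling the pairs $(j,i)$ by a single index) already has the shape of \eqref{eq:hyptypseq}. The only point needing care is that the $a_i$ are linear combinations of $m_j$-fold hypergeometric terms, whereas the coefficients $H_i$ in \Cref{def:hyptypseq} are linear combinations of \emph{ordinary} hypergeometric terms evaluated at $\sigma_i(n)$. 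This is settled by the elementary observation that restricting an $m$-fold hypergeometric term to a fixed residue class $i \mmod m$ and reindexing by $k=(n-i)/m$ yields an ordinary hypergeometric term in $k$: if $t_{n+m}/t_n$ is rational in $n$, then $t_{m(k+1)+i}/t_{mk+i}$ is rational in $k$. Setting $\sigma_{j,i}(n)\coloneqq (n-i)/m_j$, which satisfies $\sigma_{j,i}(m_j n + i)=n\in\NN$, and letting $H$ be the corresponding ordinary linear combination, each summand $a_i(n)\,\chi_{\{n\equiv i \mmod m_j\}}$ becomes $H(\sigma_{j,i}(n))\,\chi_{\{n\equiv i \mmod m_j\}}$, so $(s)_n\in(\mathcal{H}_T)$.

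For the backward direction I would reverse the computation. Starting from $s_n=\sum_{i=1}^{l} H_i(\sigma_i(n))\,\chi_{\{n\equiv j_i \mmod m_i\}}$, I form $S(z)=\sum_{n=0}^{\infty} s_n z^n$ and split the sum by the indicator factors. The $i$th summand is supported on $n\equiv j_i \mmod m_i$; writing $n=m_i k + j_i$ turns it into $\sum_{k=0}^{\infty} H_i(k)\,z^{m_i k + j_i}$, which is precisely an inner block of the form appearing in \eqref{eq:defhyptypseries}. Running the residue-class reindexing of the previous paragraph in the opposite direction converts the ordinary hypergeometric term $H_i(k)$ back into an $m_i$-fold hypergeometric term $a_{j_i}(m_i k + j_i)$, so each summand is a legitimate $S_j$ piece and $S(z)$ is a proper hypergeometric-type power series.

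I expect the main obstacle to be the careful bookkeeping of this residue-class/reindexing dictionary — specifically, checking that the passage between $m$-fold hypergeometric terms and ordinary hypergeometric terms composed with the affine maps $\sigma_i$ is genuinely reversible and respects $\KK$-linear combinations — rather than any deep structural difficulty. Once that dictionary is pinned down, the asserted correspondence is literally the restriction of the coefficient/generating-function bijection between $\KK[[z]]$ and $\KK^{\NN}$, and the one-to-one claim follows with no further work.
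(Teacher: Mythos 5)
Your proposal is correct and follows essentially the same route as the paper: the paper's entire proof is the single coefficient-regrouping identity
$\sum_{i=0}^{m_j-1}\sum_{n=0}^{\infty} a_i(m_j n+i)\,z^{m_j n+i} = \sum_{n=0}^{\infty}\bigl(\sum_{i=0}^{m_j-1} a_i(n)\,\chi_{\{n\equiv i \mmod m_j\}}\bigr) z^n$, which is exactly the computation at the heart of your argument. You merely make explicit what the paper leaves implicit, namely the reindexing dictionary between $m$-fold hypergeometric terms on a residue class and ordinary hypergeometric terms composed with the affine maps $\sigma$.
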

\begin{proof} This is established by the following equality
\begin{equation}\label{eq:propcorresp}
    \sum_{i=0}^{m_j-1}\sum_{n=0}^{\infty} a_i(m_j\,n+i)\,z^{m_j\,n+i} = \sum_{n=0}^{\infty}\left(\sum_{i=0}^{m_j-1} a_i(n) \mfoldind{i}{m_j}\right) z^n,
\end{equation}
which naturally links \Cref{def:hyptypseries} and \Cref{def:hyptypseq}. 
\CQFD
\end{proof}
In other words, the generating functions of hypergeometric-type sequences are hypergeometric-type functions. Thanks to the correspondence in \Cref{prop:seriestoseq}, we can relate hypergeometric-type sequences to $\KK(x)$-linear combinations of some special functions such as Bessel functions, Airy functions, and trigonometric functions. 

\section{Algorithmic approach}\label{sec:sec2}

Our motivation to study hypergeometric-type sequences came from the purpose of this section. Hypergeometric-type terms may be present in the sciences as trigonometric or elliptic functions with discrete arguments. A connection between hypergeometric terms and elliptic curves can be found in \cite[Section 1.3]{saito2013grobner}. Our purpose is to bring those representations into ``standard'' forms. Recent results in \cite{dreyfus2022hypertranscendence} show that the only meromorphic differentially algebraic functions (\cite{RSB2023,teguia2023arithmetic}) that are P-recursive are made with periodic functions and exponentials. It is thus reasonable to think of an algorithmic approach to convert the related discrete functions into hypergeometric-type normal forms. 

In this section, it is essential to understand the difference between an object $(1,-1,1,\ldots)\in(\mathcal{H}_T)$ and its term $(-1)^n\in\mathcal{H}_T$. The former represents the sequence $((-1)^n)_n$ as a mathematical object, and the latter as its `closed' mathematical writing. We recommend \cite[Chapter 3]{geddes1992algorithms} to remove any ambiguity in this formalism.

\subsection{Canonical and normal forms}

The definition of normal and canonical forms for hypergeometric-type sequences entails assessing computability in the ring $(\mathcal{H}_T)$. Indeed, an element of $(\mathcal{H}_T)$ may have many equivalent representations in $\mathcal{H}_T$.
\begin{example}[Distinct representations of same hypergeometric-type terms, part I]\label{ex:tworep} The sequence $(\sin^2(n\pi/4))$ is of hypergeometric type and has the two following representations:
\begin{align}
    \sin^2\left(\frac{n\pi}{4}\right) &=\frac{1}{2}\left(1-(-1)^{\frac{n}{2}} \mfoldind{0}{2}\right) \label{eq:canform1}\\
                                      &= \frac{1}{2}\left(1-\frac{I^n+(-I)^n}{2}\right)\label{eq:canform2}, 
\end{align}
where $I=\sqrt{-1}$ is the imaginary unit.
\QEDA
\end{example}
The idea of a canonical form is to have a unique and ``simple'' representation of a mathematical object in a certain class, here $\mathcal{H}_T$. Observe that the representation \eqref{eq:canform2} of $s_n\coloneqq \sin^2\left(n\pi/4\right)$ requires to work on $\QQ(I)$, whereas \eqref{eq:canform1} is a formula over $\QQ$ (no extension field). Thus, \eqref{eq:canform1} and \eqref{eq:canform2} are valid in $\QQ(I)$, making \eqref{eq:canform2} less appropriate as a canonical form of $s_n$. Hence writing formulas over the minimal field extension reduces the possible representations of a hypergeometric-type term. Unfortunately, uniqueness remains an issue even in base fields. We give two examples below.
\begin{example}[Distinct representations of same hypergeometric-type terms, part II]\label{ex:nocanform}\item 
\begin{enumerate}
    \item The general coefficient of $\cosh(z)$ has the two formulas: 
    \begin{align}
    \frac{1}{n!}\mfoldind{0}{2} &= \frac{1+(-1)^n}{2\cdot n!} \label{eq:canform3}.
    \end{align}
    \item The following was observed from different formulas of the general term of \href{https://oeis.org/A212579}{A212579}:
    \begin{align}
    \frac{31}{3} - \mfoldind{0}{2} &= \frac{1}{2}\left(\frac{59}{3} - (-1)^n \right). \label{eq:canform4}
\end{align}
\end{enumerate}
\QEDA
\end{example}
While for compactness reasons, one might prefer the left-hand side for the first item in \Cref{ex:nocanform}, both sides seem to have a relatively similar compactness for the second item. We may also choose to define a canonical form by eliminating all alternations in the formula. For instance, $(-1)^n$ can be written as $\mfoldind{0}{2}-\mfoldind{1}{2}$, the latter form being seen as canonical. In this view, our canonical forms would be the left-hand sides in \Cref{ex:nocanform}. For \Cref{ex:tworep}, \eqref{eq:canform1} may be further simplified since $(-1)^{n/2}\mfoldind{0}{2}$ alternates between $-\mfoldind{0}{2}$ and $\mfoldind{0}{2}$. So, a hypergeometric-type canonical form of $\sin^2\left(n\pi/4\right)$ would contain $4$-fold indicator terms, which sounds reasonable with the $4$ occurring in its expression. However, as the second item in \Cref{ex:nocanform} shows, further simplifications need to be done to reduce the number of $m$-fold indicator terms after substituting alternating elements. The fact that $\mfoldind{0}{2}$ survived allows us to think that there might be another way to write the formula with $\mfoldind{1}{2}$. Therefore, uniqueness may not still be guaranteed.

Why do we care about canonical form at all? The main reason is that it completely solves (theoretically) the zero-equivalence problem in $\mathcal{H}_T$. However, our motivation is to bring expressions not written as elements of $\mathcal{H}_T$ into easily recognizable hypergeometric-type terms whenever possible. The above discussion presents the difficulty of defining a canonical form in $\mathcal{H}_T$ and forces us to reduce ourselves to normal forms. We have already introduced them, but we give the definition below for formal reference.
\begin{definition}[A normal form in $\mathcal{H}_T$] Any representation of a hypergeometric-type term as in \eqref{eq:hyptypseq} is a normal form.
\end{definition}

Having stated our normal form, we need to give an algorithm for the normalization. Note that this still solves the zero equivalence problem by the unique representation of the zero sequence. The remaining part of the paper describes the algorithm sustaining this fact.

\subsection{Finding holonomic recurrence equations}\label{sec:holo}
This subsection addresses finding a holonomic recurrence equation satisfied by a given hypergeometric-type ``expression''. It means that the given formula is not necessarily in the form of \eqref{eq:hyptypseq}. The algorithm behind this conversion is the first step of our general algorithm toward finding hypergeometric-type normal forms. We mention that there is no fundamental result in this part of the paper because there are many existing software to compute univariate holonomic (differential and difference) equations. Some references are \cite{salvy1994gfun,mallinger1996algorithmic,koutschan2014holonomic,kauers2015ore,FPS}. However, unlike the differential case for which software packages are easily accessible for any computer algebra system, general-purpose algorithms for finding recurrences from holonomic expressions do not seem available in the difference case. For instance, the well-known \texttt{GFUN} package \cite{salvy1994gfun} misses such an implementation. That is one reason why we decided to include this subsection. We adapt the \texttt{HolonomicDE} algorithm (available within Maple 2022 as \texttt{DEtools:-FindODE}) of \cite{FPS} to the case of recurrences. For details on the original algorithm, see the explanation from \cite{koepf1992power}. Given a term $s_n$, the aim is to find $C_1,\ldots,C_N\in\QQ(n)$ such that
\[s_{n+N}+C_N\cdot s_{n+N-1} +\cdots + C_1\cdot s_n=0.\]
We consider an efficient variant of this method as proposed in \cite[Section 4.1.2]{BTphd},\cite[Section 2]{BTWKsymbconv}. The idea is to write $s_n$ and its $N$ first shifts in the same basis and solve the linear system that expresses their linear dependency over $\KK(n)$. We explain how the algorithm works in the following example.
\begin{example}\label{ex:recn1n} Let $s_n\coloneqq n! + \frac{1}{n!}$.
\begin{enumerate}
    \item $N=0$: since $\frac{n!}{1/n!}=n!^2\notin \QQ(n)$, we consider the basis $(e_1,e_2)$, where $e_1=n!$ and $e_2=1/n!$. Thus
                            \[s_n = e_1 + e_2,\]
and at this stage, the matrix $H$ of the components is $H=[1, 1]$. The rows of $H$ are the components of $s_{n+N}$ in the $(N+1)$st basis.
    \item $N=1$: 
    \[s_{n+1}=(n+1)! + \frac{1}{(n+1)!} = (n+1)\cdot e_1 + \frac{1}{(n+1)} \cdot e_2.\]
    \[H=\begin{bmatrix}1 & 1\\ n+1 & \frac{1}{n+1}\end{bmatrix}.\] 
    Since $s_{n+1}$ and $s_n$ are written in the same basis, we try to solve the system
    \[\begin{bmatrix}1\\ 1\end{bmatrix}\cdot C = \begin{bmatrix}-(n+1) \\ -\frac{1}{n+1}\end{bmatrix},~ C\in \QQ(n).\]
    The right-hand side is the negative transpose of the last row of $H$, and the matrix of the left-hand side is the transpose of the first $N$ rows of $H$. The obtained system has no solution, so we move on to the next iteration.
    \item  $N=2$:
    \[s_{n+2}= (n+2)(n+1)\cdot e_1 + \frac{1}{(n+2)(n+1)} \cdot e_2.\]
    \[H=\begin{bmatrix}1 & 1\\ n+1 & \frac{1}{n+1}\\ (n+2)(n+1) & \frac{1}{(n+2)(n+1)}\end{bmatrix}.\] 
    We solve the system
    \[\begin{bmatrix}1 & n+1\\ 1 & \frac{1}{n+1}\end{bmatrix}\cdot C = \begin{bmatrix}-(n+2)(n+1)\\ -\frac{1}{(n+2)(n+1)}\end{bmatrix},~ C\in \QQ(n)^2,\]
    and get a unique solution
    \[C=\begin{bmatrix}C_1\coloneqq\frac{(n+3)(n+1)^2}{n(n+2)^2}\\ C_2\coloneqq-\frac{(n^2+3n+1)(n^2+3n+3)}{n(n+2)^2}\end{bmatrix}.\]
    Thus $s_n$ satisfies the equation
    \[a_{n+2} + C_2 a_{n+1} + C_1 a_n =0.\]
    After clearing denominators, we get the holonomic recurrence equation
    \begin{equation}\label{eq:recn1n}
        (n+3)(n+1)^2a_n - (n^2+3n+1)(n^2+3n+3) a_{n+1} + n(n+2)^2 a_{n+2}=0,
    \end{equation}
    satisfied by $s_n$.
\end{enumerate}
\QEDA
\end{example}

\begin{example}\label{ex:recsin2} For $\sin^2(n\pi/4)$ the algorithm leads to the recurrence
\begin{equation}
    -a_n+a_{n+1}-a_{n+2}+a_{n+3}=0.
\end{equation}
Of course, the algorithm works in this case because the expansion formulas of trigonometric functions are used.
\QEDA
\end{example}

The above-outlined algorithm cannot apply to hypergeometric-type terms written in normal forms. For those terms we use the construction highlighted in the proof of \Cref{prop:holonomicity}. We give one basic example to illustrate how it works.
\begin{example} Let $s_n= \left(\frac{1}{3^{\frac{n}{2}}}+(-5)^{\frac{n}{2}}\right)\mfoldind{0}{2} + 2^{\frac{n}{3}} \mfoldind{0}{3}$. 

We consider $u_1(2n)=\frac{1}{3^n}+(-5)^n$ and $u_2(3n)=2^n$. To find a recurrence for $u_1(2n)$, we use the addition algorithm with $2$-shifts and find
\[ 5\,u_1(2n)-14\,u_1(2n+2)-3\,u_1(2n+4)=0. \]
Hence the equation for $u_1(n)$:
\[ 5\,u_1(n) - 14\,u_1(n+2)-3\,u_1(n+4)=0. \]
Similarly, $u_2(n)$ satisfies
\[ 2\, u_2(n) - u_2(n+3) = 0.\]
Finally, using the addition closure property for holonomic sequences we get the equation
\begin{equation}
 10\,a_{n} - 28\,a_{n + 2} - 5\,a_{n + 3} - 6\,a_{n+4} + 14\,a_{n+5} +  3\,a_{n + 7} = 0,
\end{equation}
satisfied by $s_n$.
\QEDA    
\end{example}

We will denote by HolonomicRE($s_n$,$a(n)$,$d$) the algorithm that applies the algorithm in the proof of \Cref{prop:holonomicity} if $s_n$ is already in normal form, i.e., $s_n\in\mathcal{H}_T$, and the algorithm outlined in \Cref{ex:recn1n} otherwise. The output is either a holonomic recurrence equation of order at most $d\in\NN$ in the indeterminate $a(n)$, or \texttt{FAIL} when such an equation is not found. We can omit $d$ for hypergeometric-type terms since the recurrence is obtained by construction and not by search. 

We mention that $\sin^2(z\pi/4)$ also satisfies a holonomic differential equation. The work in \cite{dreyfus2022hypertranscendence} suggests that we can use trigonometric functions to generate terms that satisfy holonomic recurrence equations. The remaining steps of our algorithm help to verify whether these terms are of hypergeometric type or not.

\subsection{Finding normal forms}

Let $(s)_n\in(\mathcal{H}_T)$ such that the given expression $s_n$ is not an element of $\mathcal{H}_T$. We want to find a representation of $s_n$ in $\mathcal{H}_T$. Suppose that $s_n$ is a solution of the following $d$th-order recurrence equation:
\begin{equation}\label{eq:REnform}
    P_d(n) a_{n+d} + P_{d-1}(n)a_{n+d-1}+\cdots+ P_0(n)a_n=0, 
\end{equation}
with polynomial coefficients $P_d,P_{d-1},\ldots,P_0\in\KK[x],$ $P_dP_0\neq 0$. Using mfoldHyper \cite[Section 3]{BTWKsymbconv}, we can compute a basis of $m$-fold hypergeometric term solutions of \eqref{eq:REnform}. This may be written as
\begin{equation}\label{eq:nformbasis}
    \mathfrak{B}\coloneqq \left\lbrace \left\lbrace m_i,\mathfrak{B}_i\right\rbrace, i=1,\ldots,N\right\rbrace \coloneqq  \left\lbrace \left\lbrace m_i,\left\lbrace h_{i,1}(m_in),\ldots,h_{i,l_i}(m_in)\right\rbrace\right\rbrace, i=1,\ldots,N\right\rbrace,
\end{equation}
$m_i\in\NN\setminus \lbrace 0 \rbrace$. For each $h_{i,j}\in \mathfrak{B}_i$, there are $m_i-1$ other solutions, namely $h_{i,j}(m_in+k_j)$, $k_j=1,~\ldots,m_i-1$. These other solutions can also be generated by mfoldHyper at the user's request. Note that the reason why the basis \eqref{eq:nformbasis} is written in this form is because the primary purpose of mfoldHyper is to find general coefficients of formal power series. We recall that mfoldHyper is an extension of the algorithms by Petkov\v{s}ek (Hyper) and van Hoeij \cite{petkovvsek1992hypergeometric,van1999finite,BTvariant}. Thanks to the correspondence of \Cref{prop:seriestoseq}, the output of mfoldHyper can be easily used to find a hypergeometric-type representation of $s_n$.

To obtain a hypergeometric-type formula for $s_n$, we look for constant coefficients $c_{i,j,k_j}\in\KK, i=1,\ldots,N,$ $j=1,\ldots,l_i$, $k_j=0,\ldots,m_i-1$, such that
\begin{equation}\label{eq:ansatz}
    s_n = \sum_{0\leq k_j\leq m_i-1, 1\leq j\leq l_i, 1\leq i\leq N} c_{i,j,k_j} h_{i,j,k_j}(n) \mfoldind{k_j}{m_i}.
\end{equation}
We evaluate both sides of \eqref{eq:ansatz} to obtain a Cramer system for the unknown $c_{i,j,k_j}$'s. Its solutions lead to a hypergeometric-type representation of $s_n$. We mention that the resulting system can have many solutions because some sub-bases or mixing of elements from the basis in \eqref{eq:nformbasis} may span the same vector space over different field extensions. Nevertheless, we need to select one of them to get the normal form we want. Let us give some examples.

\begin{example}[$s_n\coloneqq\sin^2\left(\frac{n\pi}{4}\right)$]\label{ex:sin2nform} As presented in \Cref{ex:recsin2}, $s_n$ satisfies the recurrence equation:
\[-a_n+a_{n+1}-a_{n+2}+a_{n+3}=0.\]
Algorithm mfoldHyper finds the following basis of solutions over $\QQ$:
\begin{equation}
    \left\{ \left\{1,\{1\}\right\}, \left\{2,\{(-1)^n\}\right\}\right\}.
\end{equation}
More solutions can be found if one enables computations over field extensions. This is avoided as much as possible to have the chance to obtain a normal form over the base field. We write
\[s_n = c_1 + c_2 (-1)^{\frac{n}{2}}\mfoldind{0}{2} + c_3 (-1)^{\frac{n-1}{2}}\mfoldind{1}{2},\]
and use the first terms $s_0,s_1,s_2$ to obtain the linear system
\[
    \begin{cases}
        c_1 + c_2 = 0\\
        c_1 + c_3 = \frac{1}{2}\\
        c_1 - c_2 = 1
    \end{cases}.
\]
The system has a unique solution which leads to the following normal form for $s_n$:
\begin{equation}
    s_n=\sin^2\left(\frac{n\pi}{4}\right) = \frac{1}{2} - \frac{(-1)^{\frac{n}{2}}}{2}\mfoldind{0}{2}
\end{equation}
\QEDA
\end{example}

\begin{example}[$s_n\coloneqq \sin\left(\frac{\pi}{6}\cos\left(n\pi\right)\right)\sin\left(\frac{n\pi}{4}\right)$]\label{ex:nformex2} The given term satisfies the recurrence equation:
\begin{equation}\label{eq:recnformex2}
    a_n + \sqrt{2} a_{n+1} + a_{n+2}=0.
\end{equation}
This equation does not have $m$-fold hypergeometric term solutions over $\QQ$, not even over $\QQ(\sqrt{2})$. Enabling extension fields allows mfoldHyper to find the basis of solutions
\begin{equation}\label{eq:b1nformex2}
    \left\{\left\{1, \left\{\left(\texttt{RootOf}\left(1+\sqrt{2} X + X^2\right)\right)^n\right\}\right\}\right\},
\end{equation}
where $\texttt{(RootOf(P(X)))}^n$ is a compact notation of $\alpha^n$, for all $\alpha, P(\alpha)=0$. Thus we have two hypergeometric terms over $\QQ(\sqrt{2}, I)$. The algorithm can proceed as in \Cref{ex:sin2nform} and find a hypergeometric-type representation of $s_n$. However, as this is not what our implementation will do (\Cref{sec:sec3}), we want to present a technique that our implementation does to avoid field extensions. This may also justify why we cannot always find normal forms over base fields. The point is, as designed, the algorithm of \Cref{sec:holo} tries to compute a holonomic recurrence equation of the smallest order. However, the least-order recurrence equation may not contain term solutions over the base field of the given hypergeometric-type sequence. Thus, it might be relevant to look for other recurrence equations. To do so, we ask the algorithm to search for a recurrence equation between $2$-shifts of $s_n$, i.e., $s_n, s_{n+2}, s_{n+4},$ $\ldots$. We obtain the two-term recurrence equation
\begin{equation}\label{eq:rec2nformex2}
    a_n + a_{n+4} = 0.
\end{equation}
Hence, the basis of term solutions over $\QQ$:
\begin{equation}
    \left\{\left\{4,\left\{\left(-1\right)^{n}\right\}\right\}\right\}.
\end{equation}
At this stage, we are sure to obtain a normal form in the corresponding base field since any algebraic number in the formula will come from the evaluation of $s_n$. For the ansatz
\begin{dmath*}
    s_n = c_0 (-1)^{\frac{n}{4}}\mfoldind{0}{4} +  c_1 (-1)^{\frac{n-1}{4}}\mfoldind{1}{4} +  c_2 (-1)^{\frac{n-2}{4}}\mfoldind{2}{4} +  c_3 (-1)^{\frac{n-3}{4}}\mfoldind{3}{4},
\end{dmath*}
we get the linear system:
\[
\begin{cases}
c_0=0\\
c_1=-\frac{\sqrt{2}}{4}\\
c_2=\frac{1}{2}\\
c_3=-\frac{\sqrt{2}}{4}
\end{cases}.
\]
Therefore
\begin{dmath*}
    s_n = \sin\left(\frac{\pi}{6}\cos\left(n\pi\right)\right)\sin\left(\frac{n\pi}{4}\right) =  -\frac{(-1)^{\frac{n-1}{4}}\,\sqrt{2}}{2}\mfoldind{1}{4} +  \frac{1}{2} (-1)^{\frac{n-2}{4}}\mfoldind{2}{4} +  -\frac{(-1)^{\frac{n-3}{4}}\,\sqrt{2}}{4}\mfoldind{3}{4}
\end{dmath*}
\QEDA
\end{example}

Let us present all the steps of our algorithmic approach to detecting hypergeometric-type terms by writing them in normal forms.

\vspace{-0.25cm}

\begin{algorithm}[ht]\caption{ Finding hypergeometric type formulas }\label{algo:Algo1}
    \begin{algorithmic} 
    \\ \Require A general term $s_n$ of a sequence $(s)_n\in\KK^{\NN}$, and a positive integer $d$. If $s_n\in\mathcal{H}_T$, then $d$ may be computed as: the sum of (number of hypergeometric term in each coefficient) $\times$ (the corresponding characteristic).
    \Ensure Either 
    \begin{itemize}
        \item \texttt{FAIL}, meaning that ``no holonomic recurrence equation of order at most $d$ was found'';
        \item a holonomic recurrence equation of order at most $d$ with enough initial values to identify $(s)_n$ uniquely: this means that ``$(s)_n\notin (\mathcal{H}_T)$'';
        \item a hypergeometric-type normal form, meaning that ``$(s)_n\in(\mathcal{H}_T)$''.
    \end{itemize} 
    \begin{enumerate}
    \item Apply HolonomicRE$(s_n,a(n),d)$ (\Cref{sec:holo}) and call the result $RE$.
    \item If $RE=\texttt{FAIL}$ then stop and return it. //comment: $d$ may be small.
    \item \label{setp:retohts} $RE$ is a holonomic recurrence equation of order $r\leq d$. Use \texttt{mfoldHyper} to compute a basis of $m$-fold hypergeometric term solutions of $RE$ over $\KK$ and denote it $\mathfrak{B}$.
    \item If $\mathfrak{B}$ is empty then stop and return $RE$ together with $a_0=s_0,\ldots,a_{r-1}=s_{r-1}$.
    \item $\mathfrak{B}$ is not empty and has the form 
            \[\mathfrak{B}\coloneqq \left\lbrace \left\lbrace m_i,\left\lbrace h_{i,1}(m_in),\ldots,h_{i,l_i}(m_in)\right\rbrace\right\rbrace, i=1,\ldots,N\right\rbrace,\]
        as in \eqref{eq:nformbasis}. Let
        \[u_n \coloneqq \sum_{0\leq k_j\leq m_i-1, 1\leq j\leq l_i, 1\leq i\leq N} c_{i,j,k_j} h_{i,j,k_j}(n) \mfoldind{k_j}{m_i},\]
        as in \eqref{eq:ansatz}, with the unknown constants $c_{i,j,k_j}\in\KK, i=1,\ldots,N,$ $j=1,\ldots,l_i$, $k_j=0,\ldots,m_i-1$.
    \item\label{step:p} Let $p$ be the number of constant $c_{i,j,k_j}$. $p=\sum_{i=1}^N m_i\cdot l_i$
    \item\label{step:E0} Let $E_0$ be a finite set of non-negative integers that are not roots of the leading and the trailing polynomial coefficients of $RE$, such that $E_0$ evaluates $u_n=s_n$ to a linear system of rank at least $p$.
    \item Solve the linear system $u_j=s_j, j\in E_0$, and let $S$ be the set of solutions.
    \item \label{step:solsys} If $S$ is empty then stop and return $RE$ together with $a_0=s_0,\ldots,a_{r-1}=s_{r-1}$.
    \item Return the substitution of a solution in $S$ into $u_n$.
    \end{enumerate}	
    \end{algorithmic}
\end{algorithm}

\vspace{-0.75cm}

\begin{remark}\label{rem:rem1}\item 
\begin{itemize}
    \item Note that we omitted the steps where we try to avoid field extensions to simplify the algorithm. The idea is to use mfoldHyper over $\QQ$ in step \ref{setp:retohts} with a few more recurrences satisfied by $s_n$, and see if it leads to a non-empty $S$ at step \ref{step:solsys}.
    \item The reason for avoiding roots of the leading and the trailing coefficients is a singularity issue. See the discussion in \cite[Section 2.2]{KauersDfinite}.
    \item The set $E_0$ in step \ref{step:E0} can be chosen as $\{n_0,\ldots,n_0+p\}$, where $n_0$ is a non-negative integer strictly greater than the maximum integer root of the leading and the trailing polynomial coefficients. However, finding $n_0$ that way may not be the best approach when symbolic values occur in the equation. One could look for such integer intervals by evaluation at consecutive indices starting from $0$. The latter approach would be inappropriate in today's computer only if the recurrence equation has thousands or billions of $m$-fold hypergeometric term solutions.
\end{itemize}
\end{remark}

\begin{theorem}\label{th:theo2} \Cref{algo:Algo1} is correct.
\end{theorem}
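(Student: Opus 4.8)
The plan is to verify that \Cref{algo:Algo1} terminates and that each of its three possible outputs is sound, reading the membership annotations as the intended interpretation of each branch. Termination is immediate once we grant termination of the two subroutines it calls: \texttt{HolonomicRE} searches only finitely many candidate orders $r \le d$, and \texttt{mfoldHyper} is a terminating procedure returning a finite basis as in \eqref{eq:nformbasis}; the remaining work (forming the ansatz, choosing the finite set $E_0$, and solving one linear system over $\KK$) is plainly finite. The structural fact I would record first is that, whenever $RE \ne \texttt{FAIL}$, the recurrence $RE$ is by construction (\Cref{sec:holo}) satisfied by $s_n$; hence in every branch past step~2 both the candidate $u_n$ and the input $s_n$ lie in the solution space of one fixed holonomic recurrence of order $r \le d$.

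Next I would treat the normal-form branch, which is the substantive one. By \eqref{eq:ansatz} the ansatz $u_n$ ranges over the $\KK$-span $W$ of the $m$-fold hypergeometric solutions returned by \texttt{mfoldHyper}, and since each basis element solves $RE$, so does every such $u_n$. The number $p$ of step~\ref{step:p} equals $\dim_\KK W$, and step~\ref{step:E0} forces the evaluation system $u_j = s_j$ to have rank at least $p$, so that the coefficient vector $(c_{i,j,k_j})$ is uniquely determined whenever the system is consistent. To pass from agreement on $E_0$ to the global identity $u_n = s_n$, I would invoke uniqueness of solutions of a holonomic recurrence under non-singular initial data: because $E_0$ avoids the integer roots of the leading and trailing coefficients of $RE$ and (by the choice $\{n_0,\dots,n_0+p\}$ of \Cref{rem:rem1}, extended if necessary to length $\ge r$) contains $r$ consecutive such indices, two solutions $u_n$ and $s_n$ of $RE$ agreeing on these indices must agree for all $n$. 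This shows the returned formula is a genuine normal form for $s_n$, and a fortiori $(s)_n \in (\mathcal{H}_T)$.

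The two report branches are then short. If \texttt{HolonomicRE} returns \texttt{FAIL} at step~2, by its specification no holonomic recurrence of order $\le d$ annihilates $s_n$, which is the stated meaning. If $\mathfrak{B}$ is empty (step~4) or the solution set $S$ is empty (step~\ref{step:solsys}), the algorithm returns $RE$ together with $a_0 = s_0, \dots, a_{r-1} = s_{r-1}$; since $s_n$ satisfies $RE$ and is pinned down by non-singular initial values, this pair faithfully identifies $(s)_n$, while the emptiness certifies that $s_n$ is not a $\KK$-linear combination of the $m$-fold hypergeometric solutions of this particular $RE$, i.e.\ that no normal form over $\KK$ was found for this recurrence.

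The delicate point, and the main obstacle to any stronger completeness statement, is exactly the gap between ``no normal form found for this $RE$ over $\KK$'' and ``$(s)_n \notin (\mathcal{H}_T)$''. A genuinely hypergeometric-type $s_n$ may satisfy a least-order recurrence whose $m$-fold hypergeometric solutions live only over a proper extension of $\KK$, as \Cref{ex:nformex2} shows with $a_n + \sqrt{2}\,a_{n+1} + a_{n+2} = 0$; then \texttt{mfoldHyper} over $\KK$ returns an empty $\mathfrak{B}$ and the algorithm takes a report branch even though $(s)_n \in (\mathcal{H}_T)$. I would therefore phrase the correctness proved here as soundness together with termination, and explain that the heuristic of \Cref{rem:rem1} — re-running step~\ref{setp:retohts} on a few further recurrences satisfied by $s_n$, for instance between wider shifts, until one admits solutions over $\KK$ — is what recovers the intended membership verdict in practice, without affecting the soundness of any single output.
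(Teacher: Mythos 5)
Your proposal is correct and follows the only route available here, but it is considerably more explicit than the paper's own proof, which consists of a two-sentence deferral: the theorem ``is deduced from the previous paragraphs and \Cref{rem:rem1}''. What you add that the paper leaves implicit is (i) the termination argument, (ii) the key soundness step for the normal-form branch --- that both $u_n$ and $s_n$ satisfy the same recurrence $RE$, so agreement on a non-singular window of $r$ consecutive indices propagates to agreement everywhere --- and (iii) a precise statement of what the two report branches actually certify. Point (iii) is the most valuable part of your writeup: you correctly observe that an empty $\mathfrak{B}$ or empty $S$ only certifies that no normal form over $\KK$ exists \emph{for the particular recurrence found in step 1}, not that $(s)_n\notin(\mathcal{H}_T)$, with \Cref{ex:nformex2} as witness; the paper's output annotation overstates this, and its \Cref{rem:rem1} heuristic (retrying with recurrences between wider shifts) restores the verdict only in practice, without a completeness guarantee. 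Reframing the theorem as ``termination plus soundness of each returned object'' is exactly the honest reading of what is actually proved. One residual caveat, which you inherit from the paper rather than introduce: propagating agreement of $u_n$ and $s_n$ \emph{backward} from the window $\{n_0,\dots,n_0+r-1\}$ to $n=0$ requires the trailing coefficient $P_0$ to be nonzero at every intermediate index, not merely on $E_0$; if $P_0$ has an integer root below $n_0$, the returned formula is only guaranteed to agree with $s_n$ for all sufficiently large $n$. It would be worth flagging this explicitly, since the paper's remark on ``singularity issues'' does not resolve it either.
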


\Cref{th:theo2} is deduced from the previous paragraphs and \Cref{rem:rem1}. \Cref{algo:Algo1} is a transformation for finding normal forms of hypergeometric-type terms for which holonomic recurrence equations are found in its first step. The zero sequence may be returned as $0$ or as the zeroth-order holonomic recurrence equation. Thus, we can identify distinct hypergeometric-type terms.

\section{Implementation}\label{sec:sec3}

We implemented \Cref{algo:Algo1} with Maple as a command in the package \texttt{HyperTypeSeq} \cite{HyperTypeSeq}. The package currently contains three commands: \texttt{HolonomicRE}, \texttt{REtoHTS}, and \texttt{HTS}.
\begin{enumerate}
    \item \texttt{HolonomicRE} adapts \texttt{HolonomicDE} from \texttt{FPS} \cite{FPS} to search for a holonomic recurrence equation from an expression and a given bound. The syntax is 
    \[\texttt{HolonomicRE}(\texttt{expr},\texttt{a}(\texttt{n}),\texttt{maxreorder}=\texttt{d},\texttt{reshift}=\texttt{t}),\]
    where \texttt{maxreorder} and \texttt{reshift} are optional with default values $10$ and $1$, respectively. \texttt{expr} is a term in \texttt{n}, and \texttt{a} is the name of the unknown for the equation. \texttt{maxreorder} is the maximum order of the holonomic recurrence equation sought, and \texttt{reshift} is the minimal possible shift of $\texttt{a}(\texttt{n})$ in the recurrence equation sought. The current version of \texttt{HolonomicRE} still misses an implementation for finding recurrence equations from hypergeometric-type terms containing $m$-fold indicator terms.
    \item \texttt{REtoHTS} applies \Cref{algo:Algo1} from step \ref{setp:retohts}. The syntax is
    \[\texttt{REtoHTS}(\texttt{RE},\texttt{a}(\texttt{n}),\texttt{P}).\]
    \texttt{RE} is the holonomic recurrence equation and \texttt{a}(\texttt{n}) is the unknown term in it. \texttt{P} is a procedure for computing values of the sequence at any index. \texttt{P} can also be a list of initial values; however, the list must contain the values of the evaluations of \texttt{expr} starting from $0$. 
    
    With finding holonomic recurrence equations for sequences in enumerative combinatorics, \texttt{REtoHTS} may be useful for finding new formulas.
    \item \texttt{HTS} implements \Cref{algo:Algo1} with the syntax
    \[\texttt{HTS}(\texttt{expr},\texttt{n}),\]
    with self-explanatory arguments from the previous commands. The argument \texttt{maxreorder} is also optional for \texttt{HTS}.
\end{enumerate}

For our implementation $\mfoldind{j}{m}=\chi_{\left\{\mathit{modp} \left(n ,m\right)=j\right\}}$. We can now present more sophisticated conversions of trigonometric expressions into hypergeometric-type terms. We encountered an implementation issue with Maple 2022 and Maple 2023; we could not obtain some formulas that Maple 2019 and Maple 2021 found within seconds with our code. Maple 2022 and Maple 2023 keep running. Simple checking on our implementation tells us that the problem comes from the linear system solver \texttt{SolveTools:-Linear}. We will see one of the expressions that led to this misbehavior. So, note that all the formulas in \Cref{ex:manyexpl} are obtained within seconds ($\leq 4$s) with Maple 2021. 

\begin{example}[Some expressions of hypergeometric type]\label{ex:manyexpl}\item 
    \begin{enumerate}        
        \item 
\begin{lstlisting}
> with(HyperTypeSeq):
> HTS(sin(Pi*cos(n*Pi)/6)*cos(n*Pi/4),n)
\end{lstlisting}
\begin{dmath}\label{eq:manyex1}
\frac{\left(-1\right)^{\frac{n}{4}} \chi_{\left\{\mathit{modp} \left(n ,4\right)=0\right\}}}{2}-\frac{\sqrt{2}\, \left(-1\right)^{\frac{n}{4}-\frac{1}{4}} \chi_{\left\{\mathit{modp} \left(n ,4\right)=1\right\}}}{4}+\frac{\sqrt{2}\, \left(-1\right)^{\frac{n}{4}-\frac{3}{4}} \chi_{\left\{\mathit{modp} \left(n ,4\right)=3\right\}}}{4}
\end{dmath}
        
        \item 
\begin{lstlisting}
> HTS(sin(cos(n*Pi/3)*Pi),n)
\end{lstlisting}
\begin{dmath}\label{eq:manyex2}
\left(-1\right)^{\frac{n}{3}-\frac{1}{3}} \chi_{\left\{\mathit{modp} \left(n ,3\right)=1\right\}}-\left(-1\right)^{\frac{n}{3}-\frac{2}{3}} \chi_{\left\{\mathit{modp} \left(n ,3\right)=2\right\}}
\end{dmath}

        \item 
\begin{lstlisting}
> HTS(tan(n*Pi/4),n)
\end{lstlisting}
\begin{dmath}\label{eq:manyex3}
\chi_{\left\{\mathit{modp} \left(n ,4\right)=1\right\}}+\left(\munderset{n \rightarrow 2}{\textcolor{gray}{\mathrm{lim}}}\! \tan \! \left(\frac{n \pi}{4}\right)\right) \chi_{\left\{\mathit{modp} \left(n ,4\right)=2\right\}}-\chi_{\left\{\mathit{modp} \left(n ,4\right)=3\right\}}
\end{dmath}

        \item 
\begin{lstlisting}
> HTS(tan(n*Pi/3),n)
\end{lstlisting}
\begin{dmath}\label{eq:manyex4}
\sqrt{3}\, \chi_{\left\{\mathit{modp} \left(n ,3\right)=1\right\}}-\sqrt{3}\, \chi_{\left\{\mathit{modp} \left(n ,3\right)=2\right\}}
\end{dmath}

        \item Chebyshev polynomials:
\begin{lstlisting}
> HTS(cos(n*arccos(x)),n)
\end{lstlisting}
\begin{dmath}\label{eq:manyex6}
\frac{\left(x -\sqrt{x^{2}-1}\right)^{n}}{2}+\frac{\left(x +\sqrt{x^{2}-1}\right)^{n}}{2}
\end{dmath}

        \item 
\begin{lstlisting}
> HTS(sin(n*Pi/6)*cos(n*Pi/3)-sin(n*Pi/2),n)
\end{lstlisting}
\begin{dmath}\label{eq:manyex7}
-\frac{\mathrm{I} \left(\frac{\sqrt{3}}{2}-\frac{\mathrm{I}}{2}\right)^{n}}{4}+\frac{\mathrm{I} \left(\frac{\sqrt{3}}{2}+\frac{\mathrm{I}}{2}\right)^{n}}{4}-\frac{\left(-1\right)^{\frac{n}{2}-\frac{1}{2}} \chi_{\left\{\mathit{modp} \left(n ,2\right)=1\right\}}}{2}
\end{dmath}

        \item 
\begin{lstlisting}
> HTS(sin(n*Pi/4)^2*cos(n*Pi/6)^2,n)
\end{lstlisting}
\begin{dmath}\label{eq:manyex8}
\frac{1}{4}-\frac{\left(\mathrm{-I}\right)^{n}}{8}+\frac{\left(\frac{1}{2}-\frac{\mathrm{I} \sqrt{3}}{2}\right)^{n}}{8}+\frac{\left(\frac{1}{2}+\frac{\mathrm{I} \sqrt{3}}{2}\right)^{n}}{8}-\frac{\mathrm{I} \left(-1\right)^{\frac{n}{2}-\frac{1}{2}} \chi_{\left\{\mathit{modp} \left(n ,2\right)=1\right\}}}{8}-\frac{3 \left(\mathrm{-I}\right)^{\frac{n}{3}} \chi_{\left\{\mathit{modp} \left(n ,3\right)=0\right\}}}{8}-\frac{3 \,\mathrm{I} \left(-1\right)^{\frac{n}{6}-\frac{1}{2}} \chi_{\left\{\mathit{modp} \left(n ,6\right)=3\right\}}}{8}
\end{dmath}

The following formula could not be obtained with Maple 2023 and Maple 2022. That is the reason why we used Maple 2021 for all the examples in \Cref{ex:manyexpl}.

        \item 
\begin{lstlisting}
> HTS(sin(n*Pi/4)^2*cos(n*Pi/6)^4,n,maxreorder=12)
\end{lstlisting}
\begin{dmath}\label{eq:manyex9}
\frac{9}{32}+\left(\left(\frac{3}{32}+\frac{\mathrm{I} \sqrt{3}}{8}\right) \left(-1\right)^{\frac{n}{2}}+\frac{\left(-\frac{1}{2}-\frac{\mathrm{I} \sqrt{3}}{2}\right)^{\frac{n}{2}}}{4}\right) \chi_{\left\{\mathit{modp} \left(n ,2\right)=0\right\}}-\frac{9 \chi_{\left\{\mathit{modp} \left(n ,3\right)=0\right\}}}{32}+\frac{\mathrm{I} \sqrt{3}\, \left(\mathrm{-I}\right)^{\frac{n}{3}-\frac{2}{3}} \chi_{\left\{\mathit{modp} \left(n ,3\right)=2\right\}}}{4}+\frac{\left(-\frac{1}{2}-\frac{\mathrm{I} \sqrt{3}}{2}\right)^{\frac{n}{4}} \chi_{\left\{\mathit{modp} \left(n ,4\right)=0\right\}}}{2}+\left(-\frac{27}{32}-\frac{\mathrm{I} \sqrt{3}}{8}\right) \left(-1\right)^{\frac{n}{6}} \chi_{\left\{\mathit{modp} \left(n ,6\right)=0\right\}}-\frac{\sqrt{3}\, \left(-1\right)^{\frac{n}{6}-\frac{5}{6}} \chi_{\left\{\mathit{modp} \left(n ,6\right)=5\right\}}}{4}
\end{dmath}
\end{enumerate}
\QEDA
\end{example}

\begin{example}[Collatz sequence beginning at 21, OEIS:\href{https://oeis.org/A033481}{A033481}]\label{ex:oeis2} The general term $s_n$ of the sequence is defined by the recursion
\begin{equation}
    s_0=21, \, s_{n+1}=\begin{cases}\frac{s_n}{2}\,\quad\,\phantom{1111} \text{ if } s_n\equiv 0 \mmod 2\\
                          3\,s_{n} + 1\, \quad \text{ if } s_n\equiv 1 \mmod 2
            \end{cases}\, \, \text{ for all } n\geq 1.
\end{equation}
The generating function of $(s)_n$ is given by
\[f(z)\coloneqq \frac{-7 z^{7}-14 z^{6}-28 z^{5}-56 z^{4}-5 z^{3}+32 z^{2}+64 z +21}{\left(1-z \right) \left(z^{2}+z +1\right)}.\]
This function is a non-proper hypergeometric-type function as \texttt{FPS} \cite{FPS} finds the power series formula
\[f(z)\coloneqq 7 z^{4}+14 z^{3}+28 z^{2}+63 z +19+\moverset{\infty}{\munderset{n =0}{\textcolor{gray}{\sum}}}\! 4 z^{n}+\moverset{\infty}{\munderset{n =0}{\textcolor{gray}{\sum}}}\! -2 z^{3 n} + \moverset{\infty}{\munderset{n =0}{\textcolor{gray}{\sum}}}\! -3 z^{3 n +1}.\]
Thus $(s)_n\notin (\mathcal{H}_T)$. However, if we remove the polynomial part from the expansion, i.e., we consider 
\[g(z)\coloneqq f(z)-\left(7 z^{4}+14 z^{3}+28 z^{2}+63 z +19\right),\]
then the resulting sequence of coefficients is of hypergeometric type. The formula can be deduced either with \texttt{FPS} or its `child' \texttt{HTS}. As we removed the polynomial part in the expansion of $f(z)$, the new sequence is $(u)_n = (s)_n-(19,63,28,14,7,0,0,\ldots)$. Using \texttt{FPS:-FindRE} we find the following recurrence equation satisfied by $u_n$:
\begin{lstlisting}
> RE:=FPS:-FindRE(f-(7*z^4 + 14*z^3 + 28*z^2 + 63*z + 19),z,u(n))
\end{lstlisting}
\begin{dmath}\label{eq:oeis2eq}
\mathit{RE} \coloneqq \left(-n +1\right) u \! \left(n \right)+\left(4 n -12\right) u \! \left(n -4\right)+\left(n -1\right) u \! \left(n -3\right)+\left(2 n +2\right) u \! \left(n -2\right)+\left(-4 n +12\right) u \! \left(n -1\right)+\left(-2 n -2\right) u \! \left(n +1\right)=0.
\end{dmath}
Hence the formula
\begin{lstlisting}
> REtoHTS(RE,u(n),[2, 1, 4, 2, 1, 4])
\end{lstlisting}
\begin{dmath}\label{eq:oies2res}
4-2 \chi_{\left\{\mathit{modp} \left(n ,3\right)=0\right\}}-3 \chi_{\left\{\mathit{modp} \left(n ,3\right)=1\right\}}.
\end{dmath}
The main point in this example is that formulas of solutions to holonomic recurrence equations can be found with enough initial values. We usually prefer to supply a procedure instead of a list of values, as this will allow the code to use as many values as necessary. For this example, the syntax would be:
\begin{lstlisting}
> U:=proc(n) U(n):=subs([n=n-1,u=U],solve(RE,u(n+1))) end proc: 
U(0):=2:U(1):=1:U(2):=4:U(3):=2:U(4):=1:U(5):=4:
> REtoHTS(RE,u(n),U):
\end{lstlisting}
We hid the output as it is precisely \eqref{eq:oies2res}.
 
\QEDA
\end{example}

\section{Conclusion}
In conclusion, this article introduced hypergeometric-type sequences with a formalism of interlacement described by $m$-fold indicator sequences. We showed that these sequences are generated by proper hypergeometric-type series. It may be possible to generalize the study to include proper Laurent-Puiseux series of hypergeometric type. For Puiseux series, the corresponding interlacements may be viewed as $\alpha$-fold indicator sequences for some $\alpha\in\QQ$.

We proved that $\mathcal{H}_T$ is a ring and presented an algorithm to decide whether a given holonomic term is of hypergeometric type or not. The latter comes as a complement of the algorithms by Petkov\v{s}ek and van Hoeij \cite{petkovvsek1992hypergeometric,van1999finite} to detect when a given holonomic term can be written as a linear combination of interlaced hypergeometric terms.

It is worth mentioning that C-finite sequences \cite{zeilberger1990holonomic}, also called LRS (linear recurrence sequence) \cite{ouaknine2012decision}, form a subclass of hypergeometric-type sequences. The inclusion is immediate from their writing as exponential polynomials \cite{chonev2023zeros}. Ouaknine and Worrell showed that one can decide if any C-finite sequence of order $5$ or less is positive \cite{ouaknine2014positivity}. Could the same conclusion hold for hypergeometric-type sequences that satisfy holonomic recurrence equations of order at most $5$? The target is, of course, a particular case (see \Cref{prop:holonomicity}) of the general class of holonomic sequences for which the positivity problem is only partially studied \cite{kauers2010can,pillwein2013termination,ibrahim2023positivity}.

We end with an observation concerning the generating functions of $m$-fold indicator sequences. It is easy to see that 
\begin{equation}\label{eq:genmfoldind}
    f_{m,j}(z)\coloneqq \frac{z^j}{1-z^{m}} = \sum_{n=0}^{\infty} \mfoldind{j}{m} z^n,\, \quad\, m,j\in\NN,\, j<m.
\end{equation}
As $m$-fold indicator sequences may be regarded as a basis of a free module, it sounds interesting to study the structure of proper hypergeometric-type functions and see their relation to the $f_{m,j}$'s.

\medskip

\textbf{Acknowledgment.} The author thanks Wolfram Koepf for encouraging this work. He is grateful to Jo\"{e}l Ouaknine for the good time spent at the Max Planck Institute for Software Systems, where he started this article. The author was supported by UKRI Frontier Research Grant EP/X033813/1.

\end{document}